\documentclass[10pt, conference]{IEEEtran}


\IEEEoverridecommandlockouts 

\author{Evagoras Makridis, Themistoklis Charalambous, and Christoforos N. Hadjicostis
\thanks{E. Makridis, T. Charalambous, and C. N. Hadjicostis are with the Department of Electrical and Computer Engineering, University of Cyprus, Nicosia, Cyprus.}
\thanks{T. Charalambous is also a Visiting Professor with the Department of Electrical Engineering and Automation, Aalto University, Espoo, Finland, and the FinEst Centre for Smart Cities.}
\thanks{E-mails: \{{\tt\footnotesize surname.name}\}{\tt\footnotesize @ucy.ac.cy}.}
\thanks{*Preliminary results of this work have been published in \cite{makridis2023utilizing,makridis2023harnessing}. In this paper, we additionally provide a generalized modelling of packet errors and feedback schemes for both the \arq{} and \harq{} protocols. Moreover, we provide the proof of convergence for the \hharqrc{} algorithm. Finally, we provide extensive numerical experiments, that were omitted from the conference versions in \cite{makridis2023utilizing,makridis2023harnessing}, demonstrating the superiority of the proposed algorithm in terms of improved convergence speed and reduced communication and computation overheads under realistic error-prone communication links.}}

\usepackage{flushend} 

\usepackage{pifont}
\usepackage[usenames,dvipsnames]{xcolor}
\usepackage[utf8]{inputenc} 
\usepackage{hyperref}       
\hypersetup{
    colorlinks=true,
    linkcolor=MidnightBlue,
    filecolor=mnodea,      
    urlcolor=cyan,
    citecolor=purple,
}
\usepackage{cite}
\usepackage{url}            
\usepackage{booktabs}       
\usepackage{amsfonts,amsmath,amssymb}       
\usepackage{amsthm}     
\usepackage{nicefrac}       
\usepackage{microtype}      
\usepackage{lipsum}
\usepackage{graphicx}
\usepackage{subfigure}
\usepackage{dsfont}
\usepackage{enumitem}
\usepackage{tikz} 
\usepackage{tkz-berge}
\usepackage{baskervillef}
\usepackage{caption}
\captionsetup[figure]{font=small,skip=5pt,}
\usepackage{pgfplots}
\usepackage{multicol}
\usepackage{breqn}
\usepackage{algorithm}
\usepackage{algpseudocode} 
\usepackage{textgreek}
\usepackage{soul}
\usepackage[mode=buildnew]{standalone}
\usepackage{booktabs}
\usepackage{multirow}

\usepgfplotslibrary{fillbetween}

\usetikzlibrary{math,spy,arrows,arrows.meta, backgrounds,chains, positioning, shapes.geometric, shapes.symbols, shapes.arrows, patterns, automata, petri, topaths,plotmarks,decorations.fractals,pgfplots.groupplots}%
\pgfplotsset{compat=newest}
\usepgfplotslibrary{colorbrewer}
\definecolor{skyblue1}{rgb}{0.447,0.624,0.812}


\newcommand{\vect}[1]{\mathbf{#1}} 
\newcommand{\set}[1]{\mathcal{#1}} 
\newcommand{\ie}{\textit{i.e.,~}} 
\newcommand{\eg}{\textit{e.g.,~}} 
\newcommand{\etal}{\textit{et.~al.~}} 
\newcommand{\inneighbor}[1]{\set{N}_{#1}^{-}}
\newcommand{\outneighbor}[1]{\set{N}_{#1}^{+}}

\newcommand{\outdegree}[1]{d_{#1}^{+}}

\newcommand{\hharqrc}{{\small (H)ARQ-RC}} 
\newcommand{\harqrc}{{\small HARQ-RC}} 
\newcommand{\arqrc}{{\small ARQ-RC}}  
\newcommand{\arq}{{\small ARQ}}  
\newcommand{\harq}{{\small HARQ}} 
\newcommand{\hharq}{{\small (H)ARQ}} 
\newcommand{\hharqratioconsensus}{{\small (H)ARQ}~Ratio~Consensus} 
\newcommand{\rcrs}{{\small RC-RS}} 

\DeclareSymbolFont{matha}{OML}{txmi}{m}{it}
\DeclareMathSymbol{\varv}{\mathord}{matha}{118}

\newcommand{\ra}[1]{\renewcommand{\arraystretch}{#1}}

\algrenewcommand\algorithmicindent{1.0em}%

\newtheorem{thm}{Theorem}
\newtheorem{lem}{Lemma}

\theoremstyle{definition}

\newtheorem{rem}{Remark}

\newcommand\Label[1]{&\text{\refstepcounter{equation}(\theequation)\ltx@label{#1}}&}



\makeatletter
\newcommand{\smallbullet}{} 
\DeclareRobustCommand\smallbullet{%
  \mathord{\mathpalette\smallbullet@{0.5}}%
}
\newcommand{\smallbullet@}[2]{%
  \vcenter{\hbox{\scalebox{#2}{$\m@th#1\bullet$}}}%
}
\makeatother

\title{\LARGE \bf ARQ-based Average Consensus over Directed Network Topologies with Unreliable Communication Links}

\begin{document}
\thispagestyle{empty}
\pagestyle{empty}
\maketitle

\begin{abstract}
In this paper, we address the discrete-time average consensus problem in strongly connected directed graphs, where nodes exchange information over unreliable error-prone communication links. We enhance the Robustified Ratio Consensus algorithm by exploiting features of the (Hybrid) Automatic Repeat ReQuest - \hharq{} protocol used for error control of data transmissions, in order to allow the nodes to reach asymptotic average consensus even when information is exchanged over error-prone directional networks. This strategy, apart from handling time-varying information delays induced by retransmissions of erroneous packets, can also handle packet drops that occur when exceeding a predefined packet retransmission limit. Invoking the \hharq{} protocol allows nodes to: (a) exploit the incoming error-free acknowledgement feedback to initially acquire or later update their out-degree, (b) know whether a packet has arrived or not, and (c) determine a local upper-bound on the delays imposed by the retransmission limit. By augmenting the network's corresponding weight matrix, we show that nodes utilizing our proposed \hharqratioconsensus{} algorithm can reach asymptotic average consensus over unreliable networks, while improving their convergence speed and maintaining low values in their local buffers compared to the current state-of-the-art.  
\end{abstract}

\begin{IEEEkeywords}
Average consensus, directed graphs, ratio consensus, \arq{}/\harq{} feedback, packet errors, packet drops, information delays.
\end{IEEEkeywords}

\section{Introduction}\label{sec:introduction}
Robust information exchange between computing nodes (usually referred to as agents or workers) is essential to apply various estimation, control, and optimization algorithms in spatially distributed multi-agent systems \cite{schenato2008optimal,hossain2015design,millan2013sensor,farjam2018timer,lin2022subgradient}. Such distributed systems consist of multiple agents, each interacting with its immediate neighbors and possessing only local information to achieve network-wide objectives, without the need for a central coordinator. Several distributed estimation and optimization methods \cite{talebi2019distributed,lin2022subgradient,carnevale2023distributed,maritan2023zo} typically involve nodes cooperating to reach consensus on a common value by iteratively exchanging their own values (\eg measurements, state estimates, decision variables, etc.) and appropriately updating them in order to optimize a network-wide performance index. When the agreed consensus value is the average of the initial values of all nodes in the network we say that nodes reach \emph{average consensus}. In essence, each node holds an initial value (often called mass) and exchanges it with its neighboring nodes in a distributed and iterative fashion, such that it converges to the average consensus value. For a detailed overview of consensus methods, refer to \cite{olfati2007consensus}.

In practice, the exchange of information in large-scale networks is restricted to be directional (instead of bidirectional) as a consequence of diverse transmission power, communication ranges \cite{chen2023event}, and interference levels at each individual agent in the network. In this paradigm, information flow is directional, meaning that while an agent $v_{j}$ may receive information from agent $v_{i}$, it does not necessarily imply that $v_{j}$ can also send information back to agent $v_{i}$. Consequently, the network topology is best represented by a directed graph (\emph{digraph}), where agents are depicted as nodes and the communication links are shown as edges. Under reliable directed network topologies, the results in \cite{dominguez2011distributed, cai2012average, dominguez2012distributed,priolo2014distributed} have shown how agents can asymptotically reach the average, albeit the underlying network can be directed and possibly unbalanced. 
Among these works, the \emph{Ratio Consensus} algorithm proposed in \cite{dominguez2011distributed} has been demonstrated to be superior in terms of both computation and communication for achieving average consensus \cite{hadjicostis2018distributed}. This algorithm achieves average consensus in strongly connected digraphs with reliable communication links by iteratively computing the ratio of two concurrently running linear iterations with properly chosen initial conditions, assuming each node $v_j$ knows its out-degree (\ie the number of nodes receiving information from $v_j$).

In congested networks of limited-bandwidth communication links, several communication impediments are induced, including information delays and packet drops. These communication impediments negatively affect the performance and reliability of distributed average consensus algorithms, while they can often obstruct them from computing the exact average consensus value. In the presence of computational and/or transmission delays on the information exchange between agents, the authors in \cite{hadjicostis2013average, charalambous2015distributed} proposed the \emph{Robustified Ratio Consensus} algorithm, able to reach (asymptotic in the former, finite-time in the latter) average consensus in directed communication networks in the presence of bounded time-varying delays. In the presence of packet drops, the authors in \cite{hadjicostis2015robust} proposed an algorithm, hereinafter referred to as \emph{Ratio Consensus via Running Sums} (\rcrs{}), with which nodes exchange messages containing information of their running sums to handle potential loss of packets over directed networks. To analyze the convergence of \rcrs{}, they introduced virtual nodes and links that model packet dropping communication scenarios, and using results from the ergodicity of matrix products \cite{seneta2006non} they proved that nodes asymptotically converge to the exact average consensus value. While \rcrs{} provides a robust solution for average consensus in unreliable networks, it requires nodes to maintain and exchange running sum variables, leading to increased computational and communication costs.

To address the limitations of previous works, we introduce a novel method that employs acknowledgment feedback and retransmissions via the (Hybrid) Automatic Repeat reQuest -- \hharq{} protocol. This approach ensures fast, yet reliable convergence to the average consensus value in directed graphs, without the need for exchanging variables that grow over time. In this work we make the following key contributions:
\begin{itemize}[leftmargin=*]
\item By incorporating \hharq{} protocols into the ratio consensus algorithm, our approach allows nodes to (a) exploit incoming error-free acknowledgement feedback signals to initially acquire or later update their out-degree, (b) know whether a packet has arrived or not, and (c) determine a local upper-bound on the information delays, imposed by the \arq{} retransmission limit. 
\item Furthermore, our proposed \hharqrc{} algorithm not only ensures convergence to the average consensus value in the presence of packet errors, but as supported by our numerical evaluation, it also accelerates convergence when the packet retransmission limit is set appropriately and matches the current channel conditions. 
\item Finally, we provide the proof of asymptotic convergence of the proposed method to the exact average and we evaluate its performance via simulations for different realistic scenarios. The proof is based on augmenting the weighted adjacency matrix that represents the network topology and the interactions between nodes, to model possible arbitrary and unknown time-varying, yet bounded delays, as well as packet drops on the communication links.
\end{itemize}
To the best of our knowledge, this is the first time that a realistic communication error correction protocol currently used in many real-life telecommunication systems, is integrated with distributed consensus iterations to compute the exact average consensus value over unreliable networks.

The remainder of this paper is organized as follows. Section~\ref{sec:background} describes the notation used in this paper, introduces the network communication model and summarizes the main ideas behind ratio consensus-based algorithms and \hharq{} error correction protocols. Section~\ref{sec:arq-consensus} formulates the probability of packet errors under the \hharq{} protocol, and describes the proposed \hharqratioconsensus{} algorithm for handling unreliable communication in the presence of packet errors. In Section~\ref{sec:aug_digraph} we present the augmented digraph representation needed for the proof of convergence of the proposed method, while in Section~\ref{sec:numerical_evaluation} we provide numerical simulations under different channel condition scenaria. Finally, in Section~\ref{sec:conclusions} we provide concluding remarks and future directions.

\section{Background}\label{sec:background}
\subsection{Notation}
We denote the set of real (integer) numbers by $\mathbb{R}$ ($\mathbb{Z}$) and the set of nonnegative real (integer) numbers by $\mathbb{R}_{+}$ ($\mathbb{N}$). The set of natural numbers  from $a\in\mathbb{N}$ up to $b\in\mathbb{N}$ is denoted by $\mathbb{N}_{a}^{b}=\{a,a+1,\ldots,b-1,b\}$, \eg $\mathbb{N}_{1}^{b}=\{1,2,\ldots,b\}$. Vectors are denoted by lowercase letters whereas matrices are denoted by capital letters. The all-ones $n$-dimensional column vector is denoted by $\mathbf{1}_{n}$. The identity matrix and the zero matrix (of appropriate dimensions) are denoted by $I$ and $0$, respectively. A nonnegative matrix (with nonnegative elements) is denoted by $A\geq0$, while a positive matrix (with positive elements), is denoted by $A>0$.

\subsection{Graph Theory}
Consider a \emph{strongly connected}\footnote{A directed graph is \emph{strongly connected} if each node can be reached by every other node via a sequence of edges, respecting their orientation.} directed network captured by a graph $\set{G}=(\set{V}, \set{E})$, where $\set{V}=\{v_1, \cdots, v_n\}$ is the set of nodes (representing the $n$ agents) and $\set{E} \subseteq \set{V} \times \set{V}$ is the set of edges (representing the communication links between agents). The total number of edges in the network is denoted by $m=|\set{E}|$. A directed edge $\varepsilon_{ji} \triangleq (v_j, v_i) \in \set{E}$, where $v_j, v_i \in \set{V}$, indicates that node $v_j$ can receive information from node $v_i$, \ie $v_i \rightarrow v_j$. The nodes that transmit information to node $v_j$ directly are called in-neighbors of node $v_j$, and belong to the set $\set{N}_{j}^{-}=\{v_i \in \set{V} | \varepsilon_{ji} \in \set{E}\}$. The number of nodes in the in-neighborhood is called in-degree and it is represented by the cardinality of the set of in-neighbors, $d_{j}^{-} = |\set{N}_{j}^{-}|$. The nodes that receive information from node $v_j$ directly are called out-neighbors of node $v_j$, and belong to the set $\set{N}_{j}^{+}=\{v_l \in \set{V} | \varepsilon_{lj} \in \set{E}\}$. The number of nodes in the out-neighborhood is called out-degree and it is represented by the cardinality of the set of out-neighbors, $d_{j}^{+}= |\set{N}_{j}^{+}|$. Note that when self-loops are included, the number of in-going links of node $v_j$ is ($d_j^- +1$) and similarly the number of its out-going links is ($d_j^+ +1$).

\subsection{Problem Setup}
The problem of (distributed) average consensus involves a number of nodes in a network, that cooperate by exchanging information to compute the network-wide average of their initial values, for a certain quantity of interest. At each time step $k\geq0$, each node $v_j \in \mathcal{V}$ maintains a state $x_j[k]$ and updates it at each iteration, based on the values received from its in-neighbors and its own value. In what follows, for simplicity of exposition, we assume that the state of each node $x_j[k]$ is a scalar, but in more general settings, it could be a vector. The goal of the nodes is to collaboratively compute the average of their initial values, hereinafter referred to as \emph{average consensus value} and denoted by
	\begin{align}\label{eq:ac_problem}
   		 \bar{x} \triangleq \dfrac{1}{n} \sum_{j=1}^{n} x_j[0].
	\end{align}
	In the absence of global knowledge, nodes are required to execute an iterative distributed algorithm to eventually converge to the average consensus value, by updating their states using information received from their in-neighbors.

\subsection{Ratio Consensus over Reliable Directed Graphs}
In the \emph{Ratio Consensus} algorithm \cite{dominguez2011distributed}, each node $v_j$ maintains a state variable $x_j[k] \in \mathbb{R}$ and an auxiliary variable $y_j[k] \in \mathbb{R}_{+}$ at each time step $k$. These variables are initialized at $x_j[0]=V_j$, with $V_j$ being an arbitrary initial value or measurement of node $v_j$, and $y_j[0]=1$, respectively. The iterative scheme of the ratio consensus algorithm, involves each node $v_{j}\in\set{V}$ receiving a pre-weighted piece of information from its in-neighbors and updating its states which are then used to compute the ratio  $z_j[k]\triangleq x_j[k]/y_j[k]$ according to:
\begin{subequations}\label{eq:ratio_consensus}
  \begin{align}
        x_j[k+1] &= p_{jj} x_j[k] + \sum_{v_i\in \inneighbor{j}} p_{ji} x_i[k],\\
        y_j[k+1] &= p_{jj} y_j[k] + \sum_{v_i\in \inneighbor{j}} p_{ji} y_i[k],\label{eq:auxiliary_y}\\
        z_j[k+1] &= x_j[k+1]/y_j[k+1],
   \end{align}
\end{subequations}
where the collection of weights $P=\{p_{ji}\} \in \mathbb{R}_{+}^{n \times n}$ on links $\varepsilon_{ji} \in \set{E}$, forms a column-stochastic matrix.  Often, each node $v_j$ assigns the weight $p_{lj}$ as:
\begin{align}\label{eq:weights}
    p_{lj}=\begin{cases}
    1/ (1 + \outdegree{j}), & v_l \in \outneighbor{j} \cup \{v_j\},\\
     0, & \text { otherwise},
    \end{cases}
\end{align}
which requires each node $v_j \in \mathcal{V}$ to know its out-degree. Possible zero-valued entries in the nonnegative weighted column-stochastic adjacency matrix $P$, denote the absence of communication links (edges) between the corresponding nodes in the digraph. 

The auxiliary scalar variable $y[k]$ is used to asymptotically compute the right Perron eigenvector of the primitive column stochastic matrix $P$, \ie $\pi>0$ (assumed here, without loss of generality, to be normalized so that its entries sum to unity), which is not a scaled version of $\mathbf{1}_{n}$ since $P$ is not row-stochastic \cite{seneta2006non}. Forcing the auxiliary variable $y_j[k]$ to be initialized at value $1$ for each node $v_j \in \mathcal{V}$, we can verify that $\lim_{k\rightarrow \infty} y_j[k]= \mathbf{1}_n^{\top} y[0] \pi_j = n \pi_j$. Similarly, $\lim_{k \rightarrow \infty} x_j[k] = \vect{1}_n^{\top} x[0] \pi_{j}$. Hence, the limit of the ratio $x_j[k]$ over $y_j[k]$, is the average of the initial values and is given by \cite{kempe2003gossip, dominguez2011distributed}:
\begin{align}
\lim _{k \rightarrow \infty} z_j[k]= \frac{ \mathbf{1}_n^{\top} x[0] \pi_{j}}{ n \pi_{j}} = \frac{1}{n}\sum_{i=1}^{n} x_i[0] = \bar{x}, \;\; \forall v_j \in \mathcal{V}. 
\end{align}

\begin{rem}
The Ratio Consensus algorithm assumes that communication links within the network are perfectly reliable, and that each node $v_j$ is aware of its out-degree.
\end{rem}

\subsection{Robustified Ratio Consensus over Delayed Directed Graphs}\label{subsec:robustified_ratio_consensus}

In practice, packet transmissions are often delayed due to poor channel conditions and network congestion. To address this, the authors in \cite{hadjicostis2013average} proposed a protocol that handles arbitrary time-varying delays, ensuring asymptotic consensus to the exact average despite the delays. Let node $v_{j}$ at time step $k$ undergo an \emph{a priori} unknown delay, denoted by a bounded positive integer $\tau_{ji}[k] \leq \bar{\tau}_{ji}<\infty$. The maximum delay in the network is denoted by $\bar{\tau}\triangleq\max\{\bar{\tau}_{ji}\}$. Moreover, the own value of node $v_j$ is always instantly available without delay, \ie $\tau_{jj}[k]=0, \forall~k$. Based on this notation, the strategy proposed in \cite{hadjicostis2013average} involves each node updating its states for each iteration according to
\begin{align}\label{eq:robustified_ratio_consensus}
x_{j}[k+1] &= p_{jj} x_j[k] + \sum_{v_i \in \inneighbor{j}} \sum_{r=0}^{\bar{\tau}} p_{ji} x_{i}[k-r] \iota_{ji}[k-r],\nonumber\\
y_{j}[k+1] &= p_{jj} y_j[k] + \sum_{v_i \in \inneighbor{j}} \sum_{r=0}^{\bar{\tau}} p_{ji} y_{i}[k-r] \iota_{ji}[k-r],\nonumber\\
z_j[k+1] &= x_j[k+1]/y_j[k+1].
\end{align}
The indicator function, $\iota_{ji}[k-r]$, indicates whether the bounded delay $\tau_{ji}[k-r] \leq \bar{\tau}_{ji}$ on link $\varepsilon_{ji}$ at iteration $k-r$, equals $r$ (\ie the transmission on link $\varepsilon_{ji}$ at $k-r$ arrives at node $v_j$ at iteration $k$) and is defined as
\begin{align}
\iota_{ji}[k-r]= \begin{cases}1, & \text { if } \tau_{ji}[k-r]=r, \\ 0, & \text { otherwise. }\end{cases}
\end{align}
Here, it is important to note that a transmission on link $\varepsilon_{ji}$ undergoes an \emph{a~priori} unknown delay, for which node $v_j$ is unaware of, but instead, it processes (delayed) packets as soon as they arrive successfully. Clearly, in the absence of delays, this strategy reduces to the Ratio Consensus algorithm in \cite{dominguez2011distributed} as described in \eqref{eq:ratio_consensus}.

\begin{rem}
The Robustified Ratio Consensus algorithm handles (possibly) delayed information exchange between nodes, neglecting any packet drops due to erroneous packets, while it assumes that each node $v_j$ is aware of its out-degree.
\end{rem}

\subsection{Ratio Consensus via Running Sums over Packet-Dropping Directed Graphs}\label{subsec:rc_rs}
To handle the loss of information packets, each node $v_{j} \in \set{V}$ executing the \rcrs{} algorithm \cite{hadjicostis2015robust}, maintains the variables $\sigma^{x}_{j}[k]$ and $\sigma^{y}_{j}[k]$ for the $x$ and $y$ mass, respectively. These variables correspond to the running sum values that it transmits to its out-neighbors $v_{l}\in\outneighbor{j}$ at each time step $k$. Moreover, each node $v_{j}$ keeps track of the running sum values that (eventually) arrive from its in-neighbors by updating the variables $\chi_{ji}[k]$ and $\psi_{ji}[k]$. Each node $v_{j}$ initializes the aforementioned variables at $0$, \ie $\sigma^{x}_{j}[0]=0$, $\sigma^{y}_{j}[0]=0$, $\chi_{ji}[0]=0$, and $\psi_{ji}[0]=0$ for all $v_{i}\in\inneighbor{j}$. Essentially, at each time step $k$, node $v_{j} \in \set{V}$ transmits its running sum values $\sigma^{x}_{j}[k]+p_{lj}x_{j}[k]$ and $\sigma^{y}_{j}[k]+p_{lj}y_{j}[k]$ to its out-neighbors $v_{l}\in\outneighbor{j}$. Meanwhile, it receives $\sigma^{x}_{i}[k]+p_{ji}x_{i}[k]$ and $\sigma^{y}_{i}[k]+p_{ji}y_{i}[k]$ from all its in-neighbors $v_{i} \in \inneighbor{j}$ and updates the received masses as follows: 
\begin{align}
	\chi_{ji}[k+1] &= \theta_{ji}[k] \big(\sigma^{x}_{i}[k] + p_{ji}x_{i}[k] \big) + \big(1 - \theta_{ji}[k] \big) \chi_{ji}[k],\nonumber\\
	\psi_{ji}[k+1] &= \theta_{ji}[k] \big(\sigma^{y}_{i}[k] + p_{ji}y_{i}[k] \big) + \big(1 - \theta_{ji}[k] \big) \psi_{ji}[k],\nonumber
\end{align}
where $\theta_{ji}[k]=1$ if the masses $\sigma^{x}_{i}[k] + p_{ji}x_{i}[k]$ and $\sigma^{y}_{i}[k] + p_{ji}y_{i}[k]$, sent by node $v_{i}$ at time $k$, have been successfully received by node $v_{j}$; while $\theta_{ji}[k]=0$ otherwise. According to the updated mass variables $\chi_{ji}[k+1]$ and $\psi_{ji}[k+1]$, node $v_{j}$ updates its current state and auxiliary variables as 
\begin{align}
x_{j}[k+1] &= p_{jj} x_j[k] + \sum_{v_i \in \inneighbor{j}} \Big( \chi_{ji}[k+1] - \chi_{ji}[k] \Big),\nonumber\\
y_{j}[k+1] &= p_{jj} y_j[k] + \sum_{v_i \in \inneighbor{j}} \Big( \psi_{ji}[k+1] - \psi_{ji}[k] \Big),\nonumber\\
z_j[k+1] &= x_j[k+1]/y_j[k+1].
\end{align}
\begin{rem}
The \rcrs{} algorithm handles packets that are dropped during the information exchange between nodes, by having each node transmit its running sum values. Each node is assumed to be aware of its out-degree, while the values of running sums grow in an unbounded manner.
\end{rem}

\subsection{Error Correction Protocols}\label{subsec:ARQ}

In modern message transmission systems such as the Transmission Control Protocol (TCP), the High-Level Data Link protocol, and others, nodes utilize Automatic Repeat reQuest {\small (ARQ)}, protocols to maintain reliable packet transmissions over unreliable communication channels~\cite{lin1984automatic}, \cite[\S6]{krouk2011modulation},\cite[\S5]{leon2000communication}. ARQ-based techniques have been recently applied in estimation \cite{huang2020real} to enhance the reliability and efficiency of data transmission in networked control systems, ensuring accurate state estimation even in the presence of communication errors. Although in these applications maintaining information integrity is crucial for system convergence and performance, one may also face the fundamental trade-off between the reliability and freshness of the information \cite{kosta2017age,huang2019retransmit}.

The \arq{} protocol employs error-detection codes, acknowledgment (\texttt{ACK}) or negative acknowledgment (\texttt{NACK}) messages, and retransmissions to ensure reliable data transmissions over error-prone channels. An acknowledgment is a feedback signal, often a single bit, sent by the receiver to notify the transmitter of successful or unsuccessful packet reception. In \arq{}, a packet is retransmitted after each \texttt{NACK} until it is successfully received, otherwise it is discarded. If the receiver fails to receive the packet after a predefined number of retransmission trials, the packet is dropped. Each retransmission of a packet can be seen as an information delay, since the receiver will get the intended packet successfully as soon as the transmitted packet arrives without errors. \arq{} can be enhanced by combining it with forward error correction (FEC) codes, forming the Hybrid \arq{} (\harq{}) protocol. In \harq{}, transmitters encode messages redundantly using FEC codes, allowing receivers to correct some errors in a packet. Standard \arq{} is then used to correct errors not corrected by FEC. This combination reduces the probability of packet errors in subsequent time slots, as some errors can be corrected using FEC, minimizing the need for extra retransmissions.

Here it is important to note that, \texttt{ACK/NACK} is sent over a narrowband error-free feedback channel with negligible probability of error in the reception of this small packet, which usually comprises one bit. However, this feedback channel cannot support regular data transmission due to the high data rate required and, hence, larger bandwidth and higher-order modulation. For this reason, the network topology over which the nodes exchange information data packets could still be assumed directed.

A graphical representation of the \hharq{} error control feedback mechanism is shown in Fig.~\ref{fig:arq_flowchart}. The flowchart blocks and arrows in red color denote the combined error correction techniques used in the \harq{}, on top of the classical \arq{} mechanism.
\begin{figure}[ht]
    \centering
    \includegraphics[scale=0.97]{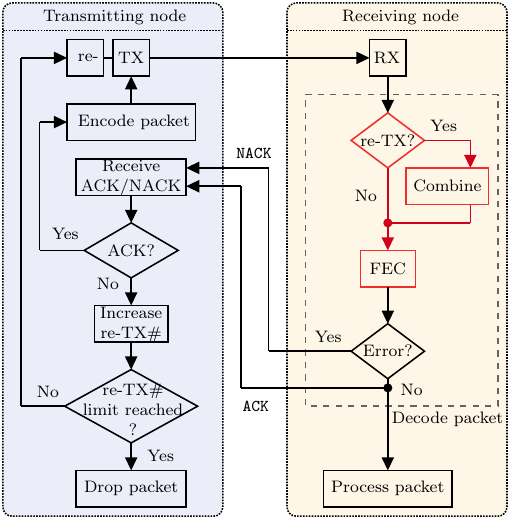}
    \caption{\hharq{} error control feedback mechanism.}
    \label{fig:arq_flowchart}
\end{figure}

\section{Average Consensus via {\small \textup{(H)}ARQ} Feedback Mechanisms}\label{sec:arq-consensus}

In this section, we present a novel distributed algorithm, referred to as the \emph{(Hybrid) Automatic Repeat reQuest Ratio Consensus} -- \hharqrc{} algorithm, designed to achieve asymptotic convergence to the exact average consensus value in \eqref{eq:ac_problem}, in the presence of unreliable communication that is prone to packet errors. Each node communicates with its out-neighbors and in-neighbors by transmitting and receiving data packets, respectively, over a strongly connected directed graph $\set{G}=(\set{V},\set{E})$, employing the \hharq{} protocol with a predetermined retransmission limit. Within this framework, at each consensus iteration, each node transmits its current state, while it may also retransmit some of its previous states, depending on the packets that arrived in error during previous transmissions. More specifically, each node assigns weights to its out-going links based on its out-degree as in formula \eqref{eq:weights}, and (re)transmits its states via broadcast, multicast, or unicast, depending on the acknowledgement feedback of the packets sent in the previous iteration. The out-degree of each node is initially acquired by broadcasting dummy packets and aggregating the resulting incoming \hharq{} feedback signals (\texttt{ACK/NACK}) from its out-neighbors. This count, establishes that each node will acquire its out-degree for assigning the weights and multicasting its packets in the subsequent consensus iterations, even in the presence of packet errors.

\subsection{Modelling Packet Errors}
In this work, we assume that each node $v_{j}\in\set{V}$ employs a \hharq{} protocol with a retransmission limit on its out-going links $\varepsilon_{lj} \in \set{E}$, denoted by $0\leq\bar{\tau}_{lj} < \infty$. This retransmission limit determines the maximum number of retransmissions selected before considering a packet as dropped. Let $f_{lj,k-r}[k]$ denote the acknowledgment feedback signal, \ie \texttt{ACK/NACK}, determined by the receiving node $v_{l}$ during time slot $k$, for a packet initially transmitted by node $v_{j}$ at time step $k-r$ to be defined as:
	\begin{align}\label{eq:feedback}
		f_{lj,k-r}[k] = 
		\begin{cases}
			0, & \text{ if \texttt{NACK},}\\
			1, & \text{ otherwise,}
		\end{cases}
	\end{align}
for $r\in \mathbb{N}_{0}^{\bar{\tau}_{lj}}$. Each feedback signal $f_{lj,k-r}[k]$ is sent back to the transmitting node $v_{j}$ during time slot $k$, and becomes available by the beginning of the next time slot. Hence, node $v_{j}$ is able to decide whether to retransmit a packet or not, at time step $k+1$. Note that, at each time step $k+1$, node $v_{j}$ retransmits exactly $\sum_{r=0}^{\bar{\tau}_{lj}-1} \big(1 - f_{lj,k-r}[k] \big)$ packets over the link $\varepsilon_{lj} \in \set{E}$. As mentioned in \S\ref{subsec:ARQ}, the number of consecutive \texttt{NACK} feedback signals of a packet until its successful reception by the intended recipient $v_{l}$, translates into the delay experienced by node $v_{l}$ to obtain the information of that particular packet. Hence, the experienced delay of a packet initially transmitted at iteration $k$ over the link $\varepsilon_{lj}$ is given by the variable:
\begin{align}
    \tau_{lj}[k] &\triangleq \min\{ r: r \in \mathbb{N}_{0}^{\bar{\tau}_{lj}}, f_{lj,k}[k+r] = 1 \}.
\end{align}

A fresh packet that is initially transmitted over the link $\varepsilon_{lj}$ at iteration $k$, is considered as dropped by node $v_{j}$ at iteration $k+\bar{\tau}_{lj}+1$, if the retransmission limit\footnote{Note that this approach works even if one considers that the retransmission limit is imposed either on the packets or on the links.} is exceeded. In what follows we denote the packet drop indicator variable for that packet by
\begin{align}
	\delta_{lj}[k] = 
	\begin{cases}
		1, & \text{ if } f_{lj,k}[k+\bar{\tau}_{lj}+1] = 0,\\
		0, & \text{ otherwise.}
	\end{cases}
\end{align}

As it will be clarified later, nodes are not required to know $\tau_{lj}[k]$ and $\delta_{lj}[k]$ \emph{a~priori} to run the \hharqrc{} algorithm. Instead, they process the information packets as soon as they arrive, while for the packets they send, they keep retransmission counters to determine whether they should retransmit a packet or consider it as dropped.
	
Hereafter, for simplicity of exposition and without loss of generality, we assume that $\bar{\tau}_{lj}=\bar{\tau}$ for all $\varepsilon_{lj}\in\set{E}$. This is a natural assumption when all links (or even packets) have the same retransmission limit $\bar{\tau}$. Hence, the information delays in the network are bounded by the global retransmission limit, \ie $0\leq \tau_{lj}[k] \leq \bar{\tau}_{lj} \leq \bar{\tau}$ for all $k\geq0$.

In what follows, we characterize the packet error probabilities for the standard \arq{} and the \harq{} protocol. Due to the absence of FEC in the standard \arq{} protocol, receiving nodes can only decode and correct the received packet by requesting retransmissions and expecting that the retransmitted packet will be successfully decoded without errors. Hence, the standard \arq{} protocol itself, cannot influence packet error probability, and at each iteration $k$, the packet error probability is independent of the packet retransmission number. 
For \harq{} protocols, however, several works (see, \eg \cite{HARQ:2001,HARQ:2010,ceran2019average}) have shown that the probability of packet error reduces exponentially with the number of retransmissions. A model on the reduction of probability of error for the $r$-th retransmission trial of a packet that was initially transmitted at time step $k-r$, is given by \cite{ceran2019average}: 
\begin{align}
\mathbb{P}(f_{lj,k-r}[k] = 0) = q_{lj}\lambda^{r},
\label{eq:error_prob}
\end{align}
for all $k\geq0$ and $r \in \mathbb{N}_{0}^{\bar{\tau}}$. Here, $q_{lj}$ denotes the initial packet error probability for a fresh packet that is transmitted for the first time. Parameter $\lambda \in (0,1]$ is the rate of packet error probability reduction for the subsequent retransmissions. In practice, it models the capability of the \harq{} protocol to correctly decode the packet using FEC, and without the need for extra retransmission. This implies that, for small $\lambda$, a packet will successfully arrive at its destined node with less retransmissions, while for larger $\lambda$, more retransmissions will be required. In essence, setting $\lambda=1$, the \harq{} protocol reduces to the \arq, for which the probability of packet error in the subsequent retransmissions equals to the initial packet error probability $q_{lj}$. Based on the packet error probability model in \eqref{eq:error_prob}, the probability that a packet initially transmitted over the link $\varepsilon_{ji}$ at time step $k$ is delayed by $r=0,1,\ldots,\bar{\tau}$ time steps or dropped, is defined as follows:
\begin{align}
    \mathbb{P}(\tau_{lj}[k]=0) &= 1 - q_{lj}, \tag*{(0 \texttt{NACK})} \nonumber\\
    \mathbb{P}(\tau_{lj}[k]=1) &= (1 - q_{lj}\lambda) q_{lj}, \tag*{(1 \texttt{NACK})} \nonumber\\
    &\;\;\vdots \nonumber\\
    \mathbb{P}(\tau_{lj}[k]=\bar{\tau}) &= 
    (1 - q_{lj}\lambda^{\bar{\tau}}) \textstyle{\prod\nolimits}_{r=0}^{\bar{\tau}-1} q_{lj} \lambda^r, \tag*{($\bar{\tau}$ \texttt{NACK}s)} \nonumber \\
    \mathbb{P}(\delta_{lj}[k]=1) &= \textstyle{\prod\nolimits}_{r=0}^{\bar{\tau}} q_{lj} \lambda^r, \tag*{(packet drop)} \nonumber
\end{align}
where $\mathbb{P}(\smallbullet)$ refers to the probability that the event "$\smallbullet$" will occur. These probabilities can be obtained by the packet error transition model shown in Fig.~\ref{fig:error_prob_transition}.
\begin{figure}[ht]
    \centering
    \includegraphics[scale=0.85]{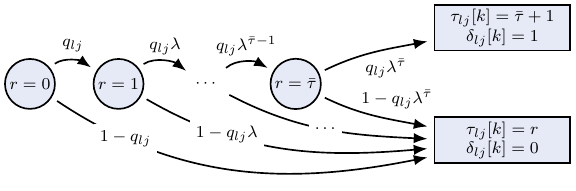}
    \caption{Packet error transition for a packet initially transmitted at iteration $k$ over the link $\varepsilon_{lj}$ with initial probability of error $q_{lj}$.}
    \label{fig:error_prob_transition}
\end{figure}

\subsection{\hharq{} Feedback Scheme}\label{subsec:feedback_schemes}
In this work, we assume that each transmitting node expects to receive back a separate acknowledgement for each (re)transmitted packet. In this scheme, each packet received by node $v_{l}$ is acknowledged individually, and the transmitting node $v_{j}$ decides whether to retransmit or not each individual packet. Note that, this scheme requires transmissions and \hharq{} feedback signals to be identified (matched to packets). To get some intuition, consider a simple example where node $v_j$ transmits to node $v_l$, $p_{lj}x_j[k-r]$ for $r\in\{0,1,2\}$ (with $\bar{\tau}=2$) at each time step $k$, as shown in Fig.~\ref{fig:multiple-packet-arq-feedback}. Note that, for simplicity of exposition we focus only on the $x$ mass, although the same logic follows for the $y$ mass for which node $v_{j}$ transmits $p_{lj} y_{j}[k-r]$ for $r\in\{0,1,2\}$.
\begin{figure}[ht]
    \centering
   \includegraphics[scale=0.94]{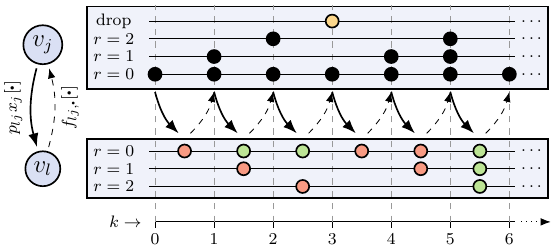}
    \caption{An individual feedback signal \texttt{ACK/NACK} $f_{lj,k-r}[k]$ is sent from node $v_l$ for each packet $r=\{0,1,2\}$ it receives within each time slot. Black bullets denote a (re)transmission of packet $p_{lj} x_j [k-r]$; red and green bullets denote \texttt{ACK} and \texttt{NACK}, respectively; yellow bullets denote the iteration when node $v_{j}$ considers packets as dropped.}
    \label{fig:multiple-packet-arq-feedback}
\end{figure}

In this example, node $v_{l}$ sends back to node $v_{j}$ the acknowledgement feedback signals $f_{lj,k-r}[k]$ for each packet $r=\{0,1,2\}$ it received at time step $k$. Based on this example, we present the variables for the information delays and packet drops, in Table~\ref{tab:simple_example}, and we briefly describe the \hharq{} feedback mechanism for the first six iterations. To better explain the example, we consider slotted time where during each time slot, node $v_{j}$ receives the acknowledgement feedback signals from previous transmissions and (re)transmits packets accordingly, and node $v_{l}$ acknowledges the packets that arrived within this time slot and sends back the corresponding feedback signals.
\begin{table}[h]\centering
\vspace{-5pt}
\ra{0.8}
\begin{tabular}{@{}rc|ccccccc@{}}\toprule
& & $k\!=\!0$ & $k\!=\!1$ & $k\!=\!2$ & $k\!=\!3$ & $k\!=\!4$ & $k\!=\!5$\\ \midrule
$\delta_{lj}[k]$ & & 1 & 0 & 0 & 0 & 0 & 0 \\ \midrule 
$\tau_{lj}[k]$ & & 3 & 0 & 0 & 2 & 1 & 0 \\
\bottomrule
\end{tabular}
\caption{Packet drop indicator variables and information delay variables for the first six iterations of the transmissions over the link $\varepsilon_{ji}$ using the \hharqrc{} algorithm.}
\label{tab:simple_example}
\end{table}

\noindent\textbf{\textit{Time slot $k=[0,1)$:}} Within this time slot, node $v_{j}$ transmits a fresh packet $p_{lj}x_{j}[0]$ and receives its feedback $f_{lj,0}[0]=0$ which indicates that it should retransmit that packet in the next time slot. Node $v_{l}$ cannot process the packet it received since it was detected in error.\\
\textbf{\textit{Time slot $k=[1,2)$:}} Within this time slot, node $v_{j}$ transmits a fresh packet $p_{lj}x_{j}[1]$, retransmits $p_{lj}x_{j}[0]$ and receives the corresponding feedback signals $f_{lj,1}[1]=1$ and  $f_{lj,0}[1]=0$, respectively, which indicate that it should retransmit $p_{lj}x_{j}[0]$ in the next time slot. Node $v_{l}$ can only process the fresh packet it received from $v_{j}$, \ie $p_{lj}x_{j}[1]$, at the next iteration $k=2$.\\
\textbf{\textit{Time slot $k=[2,3)$:}} Within this time slot, node $v_{j}$ transmits a fresh packet $p_{lj}x_{j}[2]$ and retransmits for the last time $p_{lj}x_{j}[0]$; it also receives the corresponding feedback signals $f_{lj,2}[2]=1$ and $f_{lj,0}[2]=0$, respectively. Since feedback $f_{lj,0}[2]=0$, node $v_{j}$ will consider the packet as dropped during the next time slot and release the dropped information back to its local state. Node $v_{l}$ can only process the fresh packet it received from $v_{j}$, \ie $p_{lj}x_{j}[2]$, at the next iteration $k=3$.\\
\textbf{\textit{Time slot $k=[3,4)$:}} Within this time slot, node $v_{j}$ considers the packet $p_{lj}x_{j}[0]$ as dropped since $f_{lj,0}[2]=0$, \ie $\delta_{lj}[0]=1$, and transmits only the fresh packet $p_{lj}x_{j}[3]$ for which it receives the corresponding feedback $f_{lj,3}[3]=0$, indicating that $p_{lj}x_{j}[3]$ should be retransmitted in the next time slot. Node $v_{l}$ cannot process any packet during this time slot since no packet arrived without errors.\\
\textbf{\textit{Time slot $k=[4,5)$:}} Within this time slot, node $v_{j}$ transmits a fresh packet $p_{lj}x_{j}[4]$, retransmits $p_{lj}x_{j}[3]$, and receives the corresponding feedback signals $f_{lj,4}[4]=0$ and $f_{lj,3}[4]=0$, respectively, indicating that both packets should be retransmitted in the next time slot.\\
\textbf{\textit{Time slot $k=[5,6)$:}} Within this time slot, node $v_{j}$ transmits a new packet $p_{lj}x_{j}[5]$, and retransmits $p_{lj}x_{j}[4]$ and $p_{lj}x_{j}[3]$. The corresponding feedback signals are all \texttt{ACK}s, and hence at the next time slot only a fresh packet will be transmitted. Node $v_{l}$ can process the packets it received, \ie $p_{lj}x_{j}[3]$, $p_{lj}x_{j}[4]$, and $p_{lj}x_{j}[5]$, at the next iteration.

\begin{rem}\label{rem:collective_feedback}
An alternative feedback scheme that could be employed by each receiving node $v_{l} \in \set{V}$, is to collectively acknowledge all packets that arrived within each time slot, and send back a single feedback signal to each of its in-neighbors $v_{j}\in\inneighbor{l}$. Although this scheme does not require the receiving nodes to identify each packet separately, the information exchange is expected to be slower since all packets (except the ones that exceeded the retransmission limit) should be retransmitted at time step $k+1$ in case there was at least one error for any packet at time step $k$. 
\end{rem}

\subsection{\hharqratioconsensus{} Algorithm - \hharqrc{}}
Herein, we describe the \hharqrc{} algorithm with individual acknowledgements, but one can adjust the algorithm for the collective feedback scheme\footnote{In particular, the only difference is that receiving nodes acknowledge collectively all the packets that arrived during a time slot, and send back a single \texttt{ACK/NACK} feedback signal.}. Specifically, each node $v_j$ executes the steps in Algorithm~\ref{alg:arq_based_consensus}, which can be summarized as follows:
\begin{itemize}[leftmargin=10pt]
\item \textbf{Input \& Initialization:} The initial state $x_j[0]=V_j$ is determined by the information (often local sensor reading) on which the network of nodes must reach average consensus. As soon as the input has been processed, the auxiliary variable $y_j[k]$ is initialized at $y_j[0]=1$. 
\item \textbf{Out-degree acquisition:} The out-degree of node $v_j$ is acquired by broadcasting a dummy packet and summing the number of received acknowledgement feedback signals from its out-neighbors. Note that, one could improve the out-degree acquisition accuracy by executing this process several times.
\item \textbf{At each time step $k\geq0$:}
\begin{enumerate}[leftmargin=15pt,label=\textbf{(\alph*)}] 
	\item \textbf{Receive feedback:} Node $v_j$ receives the feedback signals $f_{lj,k-r}[k]$ from its out-neighbors $v_{l}\in\outneighbor{j}$ for all $r\in\mathbb{N}_{1}^{\bar{\tau}+1}$ corresponding to the (re)transmissions of packets of the previous time slot.
	\item \textbf{Transmit fresh packets:} Node $v_{j}$ transmits a fresh packet with its current states $p_{lj}x_{j}[k]$ (resp. $p_{lj}y_{j}[k]$) to its out-neighbors $v_{l}\in\outneighbor{j}$. Note that this occurs at each time step $k$ via broadcast\footnote{Fresh packets can be broadcasted since nodes $v_{j}\in\set{V}$ transmit the same value, \ie $p_{lj}x_{j}[k]$, to all their out-neighbors.}.
	\item \textbf{Retransmit old packets:} According to the feedback signals received at node $v_{j}$ as described in step \textbf{(a)}, node $v_{j}$ retransmits\footnote{Packets that are not received successfully, by one or more of the out-neighbors of node $v_{j}$, are retransmitted via either unicast to an out-neighbor in $\outneighbor{j}$, or multicast to a subset of out-neighbors in $\outneighbor{j}$), or even broadcast to all out-neighbors $\outneighbor{j}$.} previous packets for which $f_{lj,k-r}[k]=0$ to its out-neighbors $v_{l}\in\outneighbor{j}$ for all $r\in\mathbb{N}_{1}^{\bar{\tau}}$. If the maximum retransmission limit $\bar{\tau}$ has been exceeded, \ie $f_{lj,k-\bar{\tau}-1}[k]=0$, then the packet containing the states $p_{lj}x_{j}[k-\bar{\tau}-1]$ and $p_{lj}y_{j}[k-\bar{\tau}-1]$ is considered as dropped and no further retransmissions for this packet are allowed. 
	\item \textbf{Drop obsolete packets:} For packets that are considered as dropped, node $v_{j}$ releases the information that was intended to arrive at node $v_{l}$, back to its local state.
	\item \textbf{Receive packets:} Node $v_{j}$ receives some packets from its in-neighbors $v_{i} \in \inneighbor{j}$. If a packet arrived with errors, then $v_{j}$ requests a retransmission by sending a \texttt{NACK}. Otherwise, if the packet arrived without errors, then node $v_{j}$ sends back an \texttt{ACK} and uses the packet for updating its next state.
	\item \textbf{Send feedback:} According to the feedback determined during the packet reception phase, node $v_{j}$ feeds back $f_{ji,k-r}[k]=0$ for requesting retransmission in case of packet error, or $f_{ji,k-r}[k]=1$ to acknowledge the successful reception of the packet. 
	\item \textbf{Update local states:} Upon the successful reception of packets that arrived at time step $k$, \ie for all $h\in\mathbb{N}_{0}^{\bar{\tau}}$ such that $f_{ji,k-h}[k]=1$, and by releasing the dropped information back to its state if  $f_{lj,k-\bar{\tau}-1}[k]=0$, node $v_{j}$ updates its states $x_{j}[k+1]$ and $y_{j}[k+1]$ as in lines (22-25) of Algorithm~\ref{alg:arq_based_consensus}, and then computes the ratio $z_{j}[k+1]=x_{j}[k+1]/y_{j}[k+1]$.  	
\end{enumerate}
\end{itemize}

\begin{algorithm}[H]
	\caption{\hharqrc{} at node $v_{j}$}
	\begin{algorithmic}[1]
	    \State \textbf{Input:} $x_j[0]=V_j,\bar{\tau}$
	    \State \textbf{Initialization:} $y_j[0]\!=\!1$
    	\State \textbf{Out-degree acquisition:} $\outdegree{j}=|\outneighbor{j}|$\label{alg:Out-degree-acquisition}
		\For {$k\geq0:$}
            \State \textbf{Receive feedback from all $v_l \in \outneighbor{j}$:}
            \State \;\; $f_{lj,k-r}[k]$ for all $r\in\mathbb{N}_{1}^{\bar{\tau}+1}$ as in \eqref{eq:feedback}
            \State \textbf{Transmit fresh packets to all $v_l \in \outneighbor{j}$:} 
                \State \;\; $p_{lj}x_j[k]$ and $p_{lj}y_j[k]$
            \State \textbf{Retransmit old packets to all $v_l \in \outneighbor{j}$:} 
                \State \;\; $p_{lj}x_j[k-r]$ and $p_{lj}y_j[k-r]$, for all $r\in\mathbb{N}_{1}^{\bar{\tau}}$               
                \State \;\; for which $f_{lj,k-r}[k]=0$, and update                
                \State \;\; retransmission counters                
            \State \textbf{Drop obsolete packets for all $v_l \in \outneighbor{j}$:}
            \State \;\; if retransmission limit $\bar{\tau}$ exceeded, \ie 
            \State \;\; $f_{lj,k-\bar{\tau}-1}[k]=0$
            \State \textbf{Receive packets from all $v_i \in \inneighbor{j}$:} 
            \State \;\; $p_{ji}x_i[k-h]$ and $p_{ji}y_i[k-h]$, for $h\in\mathbb{N}_{0}^{\bar{\tau}}$
            \State \;\; for which $f_{ji,k-h}[k]=1$ (received without errors)
            \State \textbf{Send feedback to all $v_i \in \inneighbor{j}$:}
            \State \;\; $f_{ji,k-h}[k]$ for all $h\in\mathbb{N}_{0}^{\bar{\tau}}$ as in \eqref{eq:feedback}            
            \State \textbf{Update local states:} 
            {\small \State \;\; $x_{j}[k+1] = \sum_{v_{i}\in\inneighbor{j}} \sum_{h=0}^{\bar{\tau}} p_{ji}x_i[k-h] f_{ji,k-h}[k]$
            \State \quad\quad\quad\quad\quad  $+\sum_{v_{l}\in\outneighbor{j}} p_{lj} x_{j}[k-\bar{\tau}-1] (1-f_{lj,k-\bar{\tau}-1}[k])$
            \State \;\; $y_{j}[k+1] = \sum_{v_{i}\in\inneighbor{j}} \sum_{h=0}^{\bar{\tau}} p_{ji}y_i[k-h] f_{ji,k-h}[k]$
            \State \quad\quad\quad\quad\quad  $+\sum_{v_{l}\in\outneighbor{j}} p_{lj} y_{j}[k-\bar{\tau}-1] (1-f_{lj,k-\bar{\tau}-1}[k])$}
        \State \textbf{Output:} $z_j[k+1]=\frac{x_j[k+1]}{y_j[k+1]}$
		\EndFor
	\end{algorithmic}\label{alg:arq_based_consensus}
\end{algorithm}

\section{Augmented Digraph Representation}\label{sec:aug_digraph}
In this section, we analyse the convergence of Algorithm~\ref{alg:arq_based_consensus}, by first introducing the random weighted adjacency matrix that corresponds to the delayed and possibly packet dropping communication topology. To simplify the analysis, we consider identical \arq{} protocols for each node. This means that the maximum retransmission limit $\bar{\tau}$ is the same for all nodes, which implies that the maximum delay on each link, $\bar{\tau}_{ji}$, equals $\bar{\tau}$. Hence, for each node $v_{j} \in \set{V}$, we add $\bar{\tau}$ extra virtual nodes, namely,  $\breve{v}_{j}^{(1)}, \breve{v}_{j}^{(2)}, \ldots, \breve{v}_{j}^{(\bar{\tau})}$, that temporarily hold and propagate the (delayed) information that is destined to arrive at the actual node $v_{j}$ after $r=1,\ldots,\bar{\tau}$ time steps. Furthermore, for each node $v_{j} \in \set{V}$, we add $\bar{\tau}$ extra virtual nodes, namely,  $\check{v}_{j}^{(1)}, \check{v}_{j}^{(2)}, \ldots, \check{v}_{j}^{(\bar{\tau})}$, that release the information of the packets that failed $\bar{\tau}$ consecutive times to be received at their destined receiving nodes, and are thus considered dropped after $\bar{\tau}+1$ time steps, back to its actual source node $v_{j}$. Finally, one can obtain the augmented digraph, $\set{\tilde{G}}=(\set{\tilde{V}}, \set{\tilde{E}})$, that models the network where nodes utilize \hharq{} protocols, and consists of $\tilde{n} = n(2\bar{\tau}+1)$ nodes, where there are $n$ original nodes, $n\bar{\tau}$ virtual nodes due to the delays, and $n\bar{\tau}$ virtual nodes due to packet drops.

To illustrate the concept of digraph augmentation, consider the following example with two nodes exchanging information over an unreliable error-prone directed network, as shown in Fig.~\ref{fig:arq_graph}. In this example, both nodes $v_{i}$ and $v_{j}$ set the retransmission limit of their \hharqrc{} algorithm to $\bar{\tau}=2$. Hence, the total number of nodes (actual and virtual) in the augmented digraph will be $\tilde{n} = n(2\bar{\tau}+1)=10$. In particular, for each actual node (shown in blue circle nodes) we add two virtual nodes (shown in yellow square nodes) corresponding to the delayed information due to packet retransmissions, and two virtual nodes (shown in red square nodes) corresponding to the path followed when releasing the information back to its source node when a packet drop occurs. 
\begin{figure}[h]
    \centering
    \includegraphics[scale=0.85]{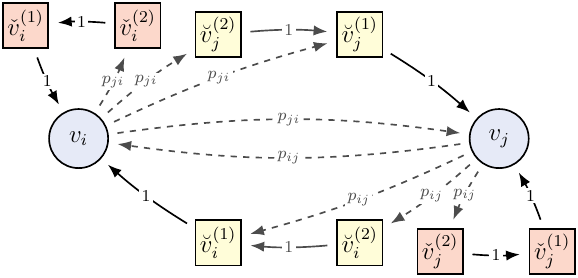}
    \caption{Two node augmented digraph corresponding to the exchange of information through the \hharqrc{} algorithm. Yellow squares represent the virtual nodes due to retransmissions, while red squares represent the virtual nodes due to packet drops.}
\label{fig:arq_graph}
\end{figure}

In essence, for a packet sent by $v_{i}$ over the link $\varepsilon_{ji} \in \set{E}$ and detected in error $0\leq r \leq \bar{\tau}$ consecutive times ($r$ retransmissions) and then arrived successfully at its destined node $v_{j}$ without errors, the information in the augmented digraph will follow the path $v_{i} \rightarrow \breve{v}_{j}^{(r)} \rightarrow \breve{v}_{j}^{(r-1)} \rightarrow \ldots \rightarrow \breve{v}_{j}^{(1)} \rightarrow v_{j}$. Conversely, if the packet has been detected in error $r=\bar{\tau}+1$ consecutive times (retransmission limit exceeded), then the information in the augmented digraph will follow the path $v_{i} \rightarrow \check{v}_{i}^{(\bar{\tau})} \rightarrow  \check{v}_{i}^{(\bar{\tau}-1)} \rightarrow \ldots \rightarrow v_{i}$.

Based on the augmented digraph, and for analysis purposes, we can rewrite the consensus iterations performed by Algorithm~\ref{alg:arq_based_consensus} in compact form as
\begin{subequations}\label{eq:consensus_matrix_form}
\begin{align}
    \tilde{x}[k+1] = \Xi[k] \tilde{x}[k],\\
    \tilde{y}[k+1] = \Xi[k] \tilde{y}[k],
\end{align}
\end{subequations}
where $\tilde{x}[k]= \left( x[k]^{\top}, \breve{x}^{\top}[k],\check{x}^{\top}[k] \right)^{\top} \in \mathbb{R}^{\tilde{n}}$ (respectively ${y}[k]$) is a vector containing the variables of the actual and virtual nodes (propagating the delayed and dropped information) at time step $k$ as 
\begin{align*}
x[k] &= \left(x_1[k], \ldots, x_n[k]\right)^{\top} \!\! \in \mathbb{R}^{n}\\
\breve{x}[k] &= \left( \breve{x}_1^{(1)}[k], \ldots, \breve{x}_n^{(1)}[k], \ldots, \breve{x}_1^{(\bar{\tau})}[k],\ldots,\breve{x}_n^{(\bar{\tau})}[k] \right)^{\top} \!\! \in \mathbb{R}^{n\bar{\tau}},\\
\check{x}[k] &= \left( \check{x}_1^{(1)}[k], \ldots, \check{x}_n^{(1)}[k], \ldots, \check{x}_1^{(\bar{\tau})}[k],\ldots,\check{x}_n^{(\bar{\tau})}[k] \right)^{\top} \!\! \in \mathbb{R}^{n\bar{\tau}}.
\end{align*}
Matrix $\Xi[k] \in \mathbb{R}_{+}^{\tilde{n} \times \tilde{n}}$ is a nonnegative random matrix associated with the augmented digraph at iteration $k$:
\begin{align}\label{eq:Xi}
\setlength\arraycolsep{2.2pt}
\Xi[k] \triangleq \left(\begin{array}{ccccccccc}
P^{(0)}[k] & I & 0 & \cdots & 0 & I & 0 & \cdots & 0 \\
P^{(1)}[k] & 0 & I & \cdots & 0 & 0 & 0 & \cdots & 0 \\
\vdots & \vdots & \vdots & \ddots & \vdots & \vdots & \vdots & \ddots & \vdots\\
P^{(\bar{\tau}-1)}[k] & 0 & 0 & \cdots & I & 0 & 0 & \cdots & 0 \\
P^{(\bar{\tau})}[k] & 0 & 0 & \cdots & 0 & 0 & 0 & \cdots & 0 \\
0 & 0 & 0 & \cdots & 0 & 0 & I & \cdots & 0 \\
\vdots & \vdots & \vdots & \ddots & \vdots & \vdots & \vdots & \ddots & I \\
D^{(\bar{\tau})}[k] & 0 & 0 & \cdots & 0 & 0 & 0 & \cdots & 0
\end{array}\right),
\end{align}
where each element of $P^{(r)}[k]$, $r=0,1,2,\ldots,\bar{\tau}$, is determined by
\begin{align}\label{eq:weighted_delayed_link}
P^{(r)}[k](l,j)= \begin{cases}P(l,j), & \text { if } \tau_{lj}[k]=r, \; \varepsilon_{lj} \in \set{E}, \\ 0, & \text { otherwise. }\end{cases}    
\end{align}
In other words, if the index $r$ that corresponds to the block matrix $P^{(r)}$ is equal to the packet retransmission number $\tau_{lj}[k]$, then the link $\varepsilon_{lj}$ will be weighted by the actual (original) weight $p_{lj}$; otherwise, its weight will be zero. Moreover, $D^{(\bar{\tau})}[k]\in \mathbb{R}_{+}^{n \times n}$ is a diagonal matrix that corresponds to the activation of the packet drop mechanism when the retransmission limit of a packet has been exceeded. In particular, each diagonal element $D^{(\bar{\tau})}[k](j,j)$ corresponds to the virtual node $\check{v}_{j}^{(\bar{\tau})}$ of node $v_{j}$ and is given by
\begin{align}\label{eq:D}
D^{(\bar{\tau})}[k](j,j) = \sum_{v_{l} \in \outneighbor{j}} \delta_{lj}[k] P(l,j), \quad \text { for all } v_{j} \in \set{V}.
\end{align}
Block matrix $D^{(\bar{\tau})}[k]$ activates the virtual links that release the information that is dropped back to its transmitting node via the virtual nodes that correspond to packet drops. In essence, this means that node $v_{j}$ will update its local buffer $\check{v}_{j}^{(\bar{\tau})}$ by summing the dropped information that was intended to send to out-neighbor $v_l \in \outneighbor{j}$ from which it received $\bar{\tau}+1$ consecutive \texttt{NACK}s, \ie for which the packet drop indicator variable $\delta_{lj}[k]=1$, and by releasing this sum back to itself through the virtual nodes $\check{v}_{j}^{(\bar{\tau}-1)} \rightarrow \cdots \rightarrow \check{v}_{j}^{(1)} \rightarrow v_{j}$. Note that, $0$ and $I$ in \eqref{eq:Xi} denote the zero and identity matrices, respectively, and they are of dimension $n \times n$. Based on the structure of $\Xi[k]$, which depends on the realized number of retransmissions and packet drops, it is clear that $\Xi[k]$ maintains column-stochasticity, although the links that establish the transmissions between nodes might be unreliable.

\subsection{Convergence Analysis}
Prior to establishing the convergence of each node to the network-wide average of the initial values, we first establish some properties of matrices $\Xi[k]$, similar to the ones established in \cite{hadjicostis2015robust} for packet dropping links. Matrix $\Xi[k]$ denotes a particular instance from the set of all possible instances of realized delays induced from packet retransmissions, and packet drops. All these possible instances are in the set $\set{X}$, which consists of a finite number of matrices of identical dimensions $\tilde{n}\times \tilde{n}$, where each matrix in $\set{X}$ corresponds to a distinct instantiation of the packet drop indicator variable $\delta_{ji}[k]$ and the delay on the links $\tau_{ji}[k]$. Moreover, each nonzero element of any matrix $\Xi[k]$ is lower bounded by a positive constant $c\triangleq\min_{j\in\set{V}} 1/(1+d^+_j)$, since the $(i,j)$ element of any $\Xi[k] \in \set{X}$ can take values of either $0$, $1$, or $1/(1+d^+_j)$.
Let $\rho$ be a finite positive integer. Then, a sequence of $\rho$ matrices in $\set{X}$, possibly with repetitions and in a certain order, materializes with a strictly positive probability. The product of a particular choice of these $\rho$ matrices (in the given order) forms a column stochastic matrix, where the entries that correspond to the actual network nodes, contain strictly positive values. This can be written as:
\begin{align}
\mathbb{P}(\Xi[k+\rho]=\Xi_{\rho},\Xi[k+\rho&-1]=\Xi_{\rho-1},\ldots, \Xi[k+1]=\Xi_1) \nonumber\\
&\geq p_{\min} >0,
\end{align}
where $p_{\min}$ is the strictly positive probability with which this sequence of $\rho$ matrices materializes. Note that matrices $\Xi_{\rho}\Xi_{\rho-1},\ldots,\Xi_{1}$ can be chosen so that the top $n$ rows of $L=\Xi_{\rho} \Xi_{\rho-1},\ldots,\Xi_1$ satisfy: $L(j,i)>0$ for all $j,i \in \set{V}$, (\eg $\rho=n$ and $\tau_{ji}[k]=0$ for all links and all $k$). These properties can be exploited for the proof of convergence of the \hharqrc{} algorithm, stated in Theorem~\ref{theorem:arq_consensus} below.

Recall that each node in the network aims at obtaining consensus to the network-wide average of the initial values:
\begin{align}\label{eq:ratio_consensus_value}
    z^{*}=\frac{\sum_{l\in\tilde{\set{V}}}\tilde{x}_l[0]}{\sum_{l\in\tilde{\set{V}}}\tilde{y}_l[0]}=\frac{\sum_{l\in\set{V}}x_l[0]}{\sum_{l\in\set{V}}y_l[0]}=\frac{1}{n}\sum_{l\in\set{V}}x_l[0],
\end{align}
by having network actual nodes calculate the ratio $z_j[k]=\tilde{x}_j[k]/\tilde{y}_j[k]$, whenever $\tilde{y}_j[k] \geq c^{\rho}$. To establish the convergence of the proposed algorithm to the exact average of the network-wide initial values, we need to show that the event that the augmented auxiliary variable satisfies $\tilde{y}_j[k] \geq c^{\rho}$ occurs infinitely often, at actual nodes, and hence as $k$ goes to infinity, the sequence of ratio computations $z_j[k]$ converges to the value in \eqref{eq:ratio_consensus_value} almost surely. 

\begin{thm}{}\label{theorem:arq_consensus}
Consider a strongly connected digraph $\set{G}(\set{V},\set{E})$, where each node $v_j \in \set{V}$ has some initial value $x_j[0]$, and $y_j[0]=1$.
Let $z_{j}[k]$ (for all $v_{j} \in \mathcal{V}$ and $k=0,1,2, \ldots$) be the output of the iterations in Algorithm~\ref{alg:arq_based_consensus}. Then, the solution to the average consensus can be obtained by each node, with probability $1$, by computing:
\begin{align}
\lim _{k \rightarrow \infty} z_{j}[k]=\frac{x_{j}[k]}{y_{j}[k]}=\frac{\sum_{v_{i} \in \mathcal{V}} x_i[0] }{n}, \; \forall v_{j} \in \mathcal{V}.
\end{align}
\end{thm}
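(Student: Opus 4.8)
The plan is to analyze the augmented recursions \eqref{eq:consensus_matrix_form}--\eqref{eq:Xi} and reduce the claim to an ergodicity statement about the random backward products $M[k]\triangleq\Xi[k-1]\Xi[k-2]\cdots\Xi[0]$. First I would record mass conservation: since every $\Xi[k]\in\set{X}$ is column-stochastic and all virtual nodes are initialized at zero, $\mathbf{1}^{\top}\tilde x[k]=\sum_{v_i\in\set{V}}x_i[0]$ and $\mathbf{1}^{\top}\tilde y[k]=n$ for all $k$. I would then introduce the error vector $e[k]\triangleq\tilde x[k]-z^{*}\tilde y[k]$ with $z^{*}$ as in \eqref{eq:ratio_consensus_value}; it obeys $e[k+1]=\Xi[k]e[k]$, $\mathbf{1}^{\top}e[k]=0$ for all $k$, and, being supported on the actual nodes at $k=0$, $e[k]=M[k]e[0]$. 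Since $z_j[k]=z^{*}+e_j[k]/\tilde y_j[k]$ whenever the ratio is formed, it then suffices to prove that (a) $e_j[k]\to 0$ for every $v_j\in\set{V}$, and (b) $\tilde y_j[k]\ge c^{\lambda}$ occurs infinitely often at each actual node, so that the valid ratio computations are well defined and converge to $z^{*}$ almost surely.

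Next I would invoke the property established just above the theorem. Partitioning time into consecutive windows of length $\lambda$, conditioned on the past each window realizes, with probability at least $p_{\min}>0$ uniformly over the history, a sequence whose ordered product is the distinguished matrix $L$ whose top $n$ rows are entrywise at least $c^{\lambda}$. A convenient concrete instance of such a ``good'' window is one during which no transmission error occurs at all: then no fresh mass enters the virtual nodes, any mass residing in delay/buffer nodes from earlier is flushed to actual nodes within $\bar{\tau}+1$ steps (delays are bounded by $\bar{\tau}$), and the remaining error-free iterations act exactly as the primitive column-stochastic matrix $P$ on the strongly connected, self-looped digraph $\set{G}$, so that for $\lambda$ large enough the actual-node block of the product is strictly positive. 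By the conditional (L\'evy) Borel--Cantelli lemma, almost surely infinitely many disjoint windows are good.

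Each good block, appearing sandwiched inside $M[k]$, contracts a Hajnal-type coefficient of ergodicity of the submatrix of $M[k]$ indexed by the actual nodes by a factor $1-\theta$ with $\theta>0$ depending only on $c$, $\lambda$, $\tilde n$, and column-stochasticity keeps this coefficient non-increasing in between; since contractions occur infinitely often a.s., the first $n$ columns of $M[k]$, restricted to the actual-node rows, coalesce to a common random limit, hence $M[k]w\to 0$ in the first $n$ coordinates for every zero-sum $w$ supported there, which gives (a). The same good block also drives the conserved unit-mass of $\tilde y$ through actual nodes and then spreads it, so that $\tilde y_j[k]\ge c^{\lambda}$ at the end of every good window for every $v_j\in\set{V}$, which gives (b). Combining (a) and (b), along the almost-surely infinite subsequence of valid times, $z_j[k]=z^{*}+e_j[k]/\tilde y_j[k]\to z^{*}=\frac{1}{n}\sum_{v_i\in\set{V}}x_i[0]$ for all $v_j\in\set{V}$, as claimed.

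The hardest part will be the ergodicity/contraction step: making the coefficient-of-ergodicity argument rigorous for the block-structured, time-varying \emph{random} matrices $\Xi[k]$, in particular controlling the interplay between the actual-node rows and the delay/buffer virtual nodes, and certifying that a good window simultaneously coalesces the relevant columns of $M[k]$ and re-injects at least $c^{\lambda}$ of auxiliary mass into every actual node. This is where the choice of $\lambda$ (large enough, as a function of $n$ and $\bar{\tau}$, for a product with strictly positive top $n$ rows to be realizable) and the machinery of \cite{hadjicostis2015robust} are used.
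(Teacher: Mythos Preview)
Your proposal is correct and follows essentially the same route as the paper: both reduce the claim to (i) the coefficient of ergodicity of $L_k=\prod_{\ell=0}^{k-1}\Xi[\ell]$ vanishing almost surely via infinitely many ``good'' $\lambda$-windows (Borel--Cantelli), and (ii) the event $\tilde y_j[k]\ge c^{\lambda}$ occurring infinitely often at each actual node, which the paper packages as Lemmas~\ref{lem:hadjicostis2015robust1}--\ref{lem:hadjicostis2015robust2} imported from \cite{hadjicostis2015robust}. Your error-vector decomposition $e[k]=\tilde x[k]-z^{*}\tilde y[k]$ with $z_j[k]=z^{*}+e_j[k]/\tilde y_j[k]$ is an equivalent reorganization of the paper's direct row-sandwiching $L_k(j,:)=c_j[k](\mathbf{1}^{\top}+\mathbf{e}^{\top}[k])$ of the ratio; the substantive ingredients and the deferred hard step are identical.
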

\begin{proof}
See Appendix~\ref{appendix:proofs}.
\end{proof}

\section{Numerical Evaluation}\label{sec:numerical_evaluation}
Consider the strongly connected directed network $\set{G}=(\set{V},\set{E})$ shown in Fig.~\ref{fig:graph_numerical_example} consisting of five nodes, \ie $n=|\set{V}|=5$, with each node $v_j$ choosing the weights on its out-going links (including its self-loop link) as in \eqref{eq:weights}. Each node has immediate access to its current states, although the self-loop links are not shown for simplicity.
\begin{figure}[h]
\begin{minipage}{.47\textwidth}
    \centering
    \includegraphics{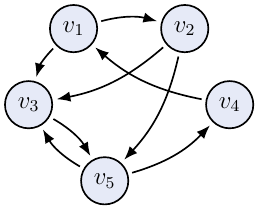}
    \caption{Five node strongly-connected digraph.}
\label{fig:graph_numerical_example}
\end{minipage}
\end{figure}
Thus, the digraph corresponds to the following column-stochastic weighted adjacency matrix:
\begin{align}\label{eq:P}
P = \left(\begin{array}{ccccc}
1/3 & 0 & 0 & 1/2 & 0 \\
1/3 & 1/3 & 0 & 0 & 0 \\
1/3 & 1/3 & 1/2 & 0 & 1/3 \\
0 & 0 & 0 & 1/2 & 1/3 \\
0 & 1/3 & 1/2 & 0 & 1/3
\end{array}\right).
\end{align}
The variables of each node $x_j$, $y_j$ are updated at each time step using Algorithm~\ref{alg:arq_based_consensus} with initial values ${x}[0]=(x_1[0] \ldots x_5[0])^{\top}=(4~5~6~3~2)^{\top}$, and ${y}[0]=(y_1[0] \ldots y_5[0])^{\top}=(1~1~1~1~1)^{\top}$, respectively. In the examples that follow we consider that each packet initially transmitted over any link $\varepsilon_{ji}\in\set{E}$ may arrive in error with initial probability of packet error $q=q_{ji}$. Similarly, without loss of generality, we assume that all nodes $v_{i}\in\set{V}$ utilize an \hharqrc{} algorithm of which the retransmission limit is set to $\bar{\tau}_{ji} = \bar{\tau}$ for all $v_{j} \in \outneighbor{i}$.

First, to gain insight into the operation of the \hharqrc{} algorithm in terms of the state updates, consider the aforementioned network where each packet follows the error transition in \eqref{eq:error_prob} with initial probability of error $q=0.6$. In this example, we assume that each node is employed with the \arqrc{} algorithm with retransmission limit set to $\bar{\tau}=2$. To keep the example simple, consider that nodes acknowledge the received packets collectively (as described in Remark~\ref{rem:collective_feedback}), that is, they send an \texttt{ACK} if all packets arrived without errors, while they send a \texttt{NACK} if there exists at least one packet that arrived with errors. Fig.~\ref{fig:arq_consensus_transmissions_node1} shows the evolution of the variables and states at node $v_1$, which receives packets only from node $v_{4}$, during the first time steps of the execution of the \arqrc{} algorithm. 
\begin{figure}[]
\centering
\includegraphics{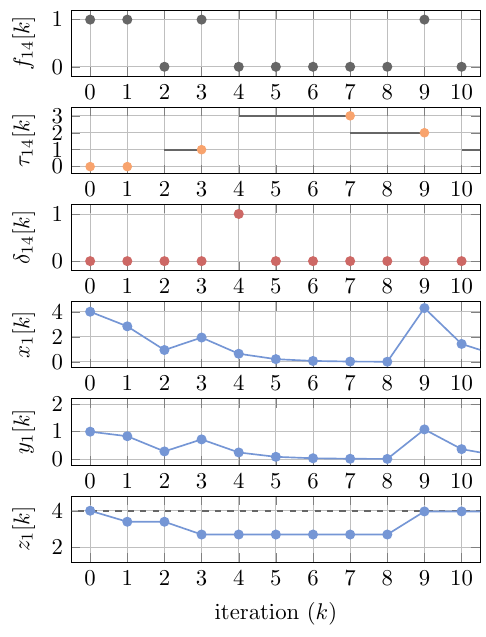}
\caption{Packet transmissions from node $v_4$ to node $v_1$ using the \hharqrc{} algorithm and $q_{ji}=0.6$, $\bar{\tau}=2$.}
\label{fig:arq_consensus_transmissions_node1}
\end{figure}
As shown in Fig.~\ref{fig:arq_consensus_transmissions_node1}, the state variables $x_{1}[k]$ and $y_{1}[k]$ (and hence the ratio $z_1[k]$) are driven by the reception of packets $p_{14}x_{4}[k]$ that arrive without errors (when $f_{14}[k]=1$) from node $v_{4}$, and its own weighted state $p_{11}x_{1}[k]$ which is delay-free at all $k\geq 0$. Note that, the packets (re)transmitted between time steps $4-7$ are erroneous, and since the retransmission limit is exceeded $\tau_{14}[4]=3$, the corresponding packets will be dropped and released back to node $v_{4}$ at time step $k=7$.

Next, we evaluate the performance of the \arqrc{} (with $\lambda=1$)  and \harqrc{} (with $\lambda=0.5$) algorithms for three different initial packet error probabilities, \ie $q=\{0.1,0.5,0.9\}$, and for two different configurations on the retransmission limit, \ie $\bar{\tau}=\{1,3\}$. In Fig.~\ref{fig:arq_ratio_taubar1}-\ref{fig:harq_ratio_taubar3} we present the ratio $z_{j}[k]$ of each node $v_{j}\in\set{V}$ using the \arqrc{} and \harqrc{} algorithms, for $\bar{\tau}=1$ and $\bar{\tau}=3$. For relatively good channel conditions, \ie $q=0.1$, the convergence speed of the nodes is similar for both algorithms and both retransmission limits since most of the packets will arrive at their destined nodes successfully without errors. For $q=0.5$, we can observe a slightly better performance using the \arqrc{} with retransmission limit set to $\bar{\tau}=3$ rather than $\bar{\tau}=1$. Now, focusing on the case where $q=0.9$, it is more obvious that when setting a higher retransmission limit, \ie $\bar{\tau}=3$, for both algorithms, the convergence to the exact average consensus value is faster. The convergence speed-up becomes more apparent for the \harqrc{} when $\bar{\tau}=3$. This suggests that, when the channel conditions are bad, setting an appropriately chosen higher retransmission limit would allow the nodes to achieve faster convergence to the average consensus value. 

\begin{figure*}[h]
\centering
\includegraphics[scale=0.98]{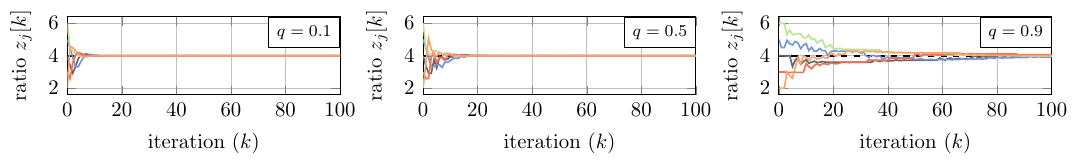}
\vspace{-4pt}
\caption{Ratio $z_j[k]$ at each node $v_j$ using \arqrc{} ($\lambda=1$) with retransmission limit $\bar{\tau}=1$.}
\vspace{-10pt}
\label{fig:arq_ratio_taubar1}
\end{figure*}
\begin{figure*}[h]
\centering
\includegraphics[scale=0.98]{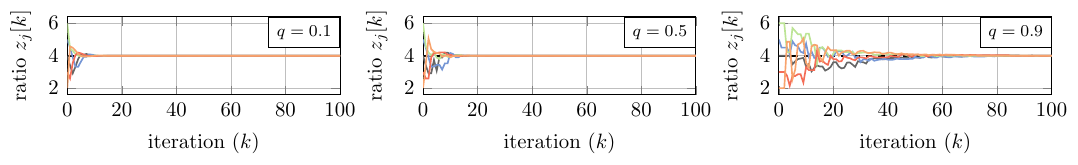}
\vspace{-4pt}
\caption{Ratio $z_j[k]$ at each node $v_j$ using \arqrc{} ($\lambda=1$) with retransmission limit $\bar{\tau}=3$.}
\vspace{-10pt}
\label{fig:arq_ratio_taubar3}
\end{figure*}
\begin{figure*}[h]
\centering
\includegraphics[scale=0.98]{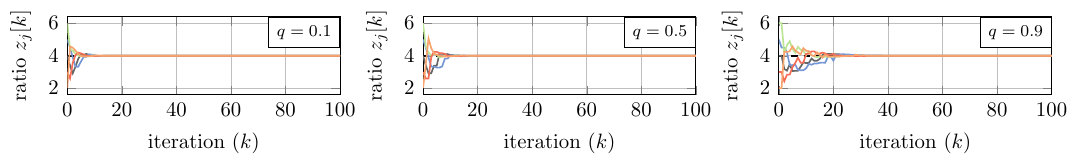}
\vspace{-4pt}
\caption{Ratio $z_j[k]$ at each node $v_j$ using \harqrc{} ($\lambda=0.5$) with retransmission limit $\bar{\tau}=1$.}
\vspace{-10pt}
\label{fig:harq_ratio_taubar1}
\end{figure*}
\begin{figure*}[h]
\centering
\includegraphics[scale=0.98]{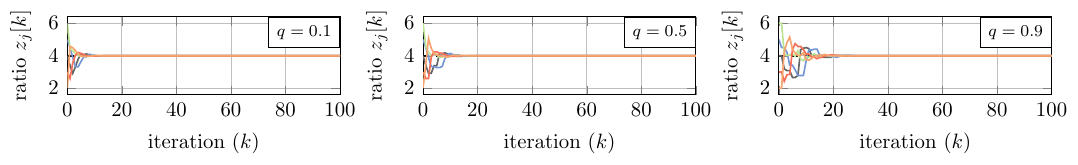}
\vspace{-4pt}
\caption{Ratio $z_j[k]$ at each node $v_j$ using \harqrc{} ($\lambda=0.5$) with retransmission limit $\bar{\tau}=3$.}
\vspace{-10pt}
\label{fig:harq_ratio_taubar3}
\end{figure*}


\subsection{Mean Residual vs. Retransmission Limit}
To quantify the performance of the proposed algorithm with respect to different retransmission limits $\bar{\tau}$, we consider three different channel conditions captured by different initial probability of error, \ie $q_{ji}=q\in\{ 0.1, 0.5, 0.9\}$. In particular, to obtain an overall assessment of the convergence over all time steps of the \hharqrc{} algorithm, we record the mean residual, denoted by $\tilde{e}$, over the total duration of $200$ iterations for $100$ simulations. 
We obtain the mean residual $\tilde{e}$ for different retransmission limits, as shown in Fig.~\ref{fig:arq_mean_residual}.
\begin{figure}[h]
\centering
\includegraphics[scale=0.95]{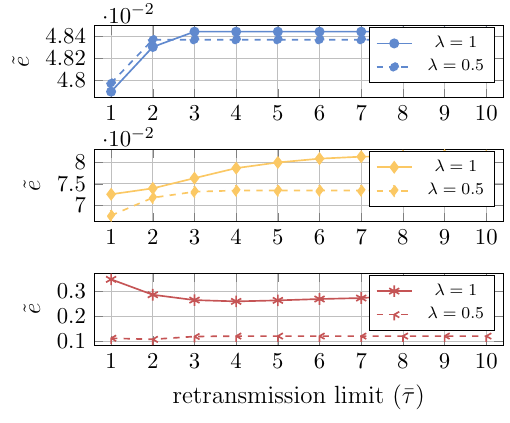}
\caption{Mean residual of \arqrc{} ($\lambda=1$) and \harqrc{} ($\lambda=0.5$), over different retransmission limits $\bar{\tau}$, for different initial packet error probabilities, \ie top: $q=0.1$, middle: $q=0.5$, and bottom: $q=0.9$.}
\label{fig:arq_mean_residual}
\end{figure}
This figure suggests that when channel conditions are good, \ie $q=0.1$, then it is preferable to set a lower retransmission limit, since, according to \eqref{eq:error_prob}, most of the packets arrive at the receiving node successfully without many retransmission trials, for both $\lambda=0.5$ and $\lambda=1$. Here it is worth mentioning that, for $\bar{\tau}\geq 3$, there is no improvement in terms of convergence speed. Similar behavior is observed for $q=0.5$, although the saturation in terms of convergence speed occurs for higher retransmission limits, \ie around $\bar{\tau}\geq 7$. In contrast, for bad channel conditions, \ie $q=0.9$, the \hharqrc{} performs better in terms of convergence speed for $\bar{\tau}=4$ with $\lambda=1$, while with $\lambda=0.5$, the convergence is slightly faster for $\bar{\tau}=1$, and saturates for $\bar{\tau}\geq 3$. This is due to the fact that the trade-off between transmitting fresh information, rather than retransmitting old information, diminishes.

\begin{figure*}[!t]
\centering
\includegraphics[width=\linewidth-15pt]{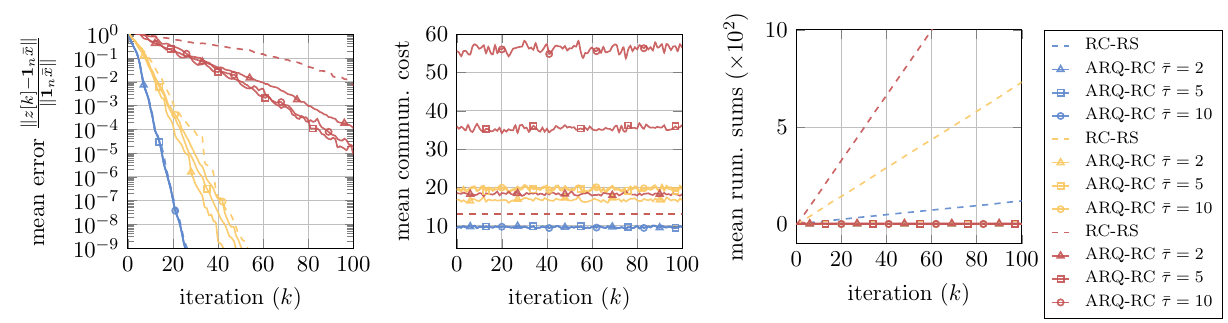}
\caption{Mean average consensus error at each iteration (left plot), mean communication cost captured by the mean number of information and feedback packets transmitted in the network at each iteration (middle plot), and mean running sums value at each iteration (right plot) of \arqrc{} ($\lambda=1$) compared to the RC-RS algorithm for different channel conditions. Good channel conditions ($q=0.1$) are shown in blue; medium channel conditions ($q=0.5$) are shown in yellow; poor channel conditions ($q=0.9$) are shown in red.}
\vspace{-5pt}
\label{fig:arq_comparisons}
\end{figure*}

\begin{figure*}[!t]
\centering
\includegraphics[width=\linewidth-15pt]{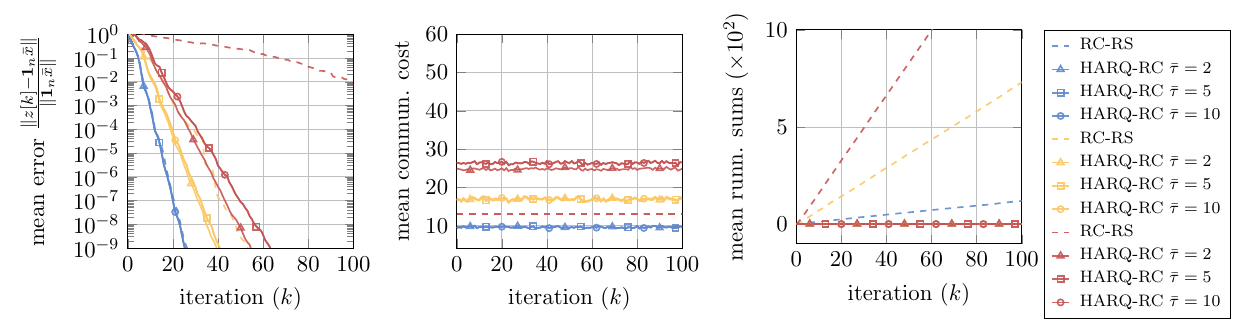}
\caption{Mean average consensus error at each iteration (left plot), mean communication cost captured by the mean number of information and feedback packets transmitted in the network at each iteration (middle plot), and mean running sums value at each iteration (right plot) of \harqrc{} ($\lambda=0.5$) compared to the RC-RS algorithm for different channel conditions. Good channel conditions ($q=0.1$) are shown in blue; medium channel conditions ($q=0.5$) are shown in yellow; poor channel conditions ($q=0.9$) are shown in red.}
\vspace{-5pt}
\label{fig:harq_comparisons}
\end{figure*}

\subsection{Convergence Speed vs. Communication Cost}
To get intuition regarding the convergence speed and communication cost of the proposed algorithm, we consider the same setup as before for the digraph shown in Fig.~\ref{fig:graph_numerical_example}. Specifically, we run $500$ simulations, in which we examine the mean average consensus error $\frac{\lVert {z}[k] - \mathbf{1}_{n} \bar{x}\rVert}{\lVert \mathbf{1}_{n} \bar{x}\rVert}$, the mean number of (re)transmissions sent over the network, and the mean values held in the local buffers for compensating the dropped information, at each time step $k$. Fig.~\ref{fig:arq_comparisons} and Fig.~\ref{fig:harq_comparisons} depict the aforementioned metrics for different channel conditions captured by the initial probability of error $q\in\{ 0.1, 0.5, 0.9\}$. In particular, Fig.~\ref{fig:arq_comparisons} compares the performance of the state-of-the-art algorithm \rcrs{} with the \arqrc{} algorithm with $\lambda=1$, while Fig.~\ref{fig:harq_comparisons} compares it with the \harqrc{} algorithm with $\lambda=0.5$. Under this setup, we additionally examine convergence speed and communication cost for different number of retransmission limits $\bar{\tau}$.

Clearly, for higher probability of initial packet error, \ie $q=\{ 0.5, 0.9 \}$, our proposed algorithm \hharqrc{} (with any $\lambda=\{0.5,1\}$ and any retransmission limit $\bar{\tau}=\{2,5,10\}$) outperforms the \rcrs{} algorithm in terms of convergence speed, as shown on the left plots of Fig.~\ref{fig:arq_comparisons} and Fig.~\ref{fig:harq_comparisons}. Of course, this is in exchange of a higher number of transmitted packets as shown in the middle plots of Fig.~\ref{fig:arq_comparisons} and Fig.~\ref{fig:harq_comparisons}. However, one can notice the convergence speed-up when using the \harqrc{} (with $\lambda=0.5$), especially in poor channel conditions where $q=0.9$. This significant improvement in terms of convergence speed is due to the capability of the \harqrc{} algorithm to reduce the probability that a packet is received with errors in the subsequent transmissions as previously discussed in \eqref{eq:error_prob}. Interestingly, the mean number of sent packets within the network at each iteration $k$ using \harqrc{} algorithm, \ie around $18-20$ transmissions per iteration, is quite close to the one achieved with \rcrs{}, \ie around $14$ transmissions per iteration.

As aforementioned, implementing \rcrs{} algorithm would require nodes to maintain extra variables to store run sum values that correspond to the information that is lost due to packet errors and drops. These values are accumulated at each iteration, and as discussed in \cite{hadjicostis2015robust}, they grow linearly in an unbounded manner with the number of iterations. This is shown in Fig.~\ref{fig:arq_comparisons} and Fig.~\ref{fig:harq_comparisons} for different packet error probabilities. Although this would not be an issue for nodes' memory capabilities (assuming that there exists a distributed termination mechanism), it would require nodes to transmit packets of increasing length, which would induce additional delays and packet drops due to the increased probability of packet errors. However, when using the \arqrc{} and \harqrc{} algorithm shown in Fig.~\ref{fig:arq_comparisons} and Fig.~\ref{fig:harq_comparisons}, respectively, nodes are not required to keep such variables as they have an acknowledgement feedback at each iteration that allows them to reset their local buffers immediately after releasing the dropped information back to their current local states. This allows nodes to keep their local buffers (the values of which are not transmitted, as opposed to the \rcrs{} algorithm) almost empty at each time step.


\section{Conclusions and Future Directions}\label{sec:conclusions}

\subsection{Conclusions}
In this paper, we proposed a distributed algorithm to achieve discrete-time asymptotic average consensus, in the presence of unreliable directed networks that induce packet errors. By incorporating \hharq{} protocols into a ratio consensus-based algorithm, nodes are guaranteed to achieve average consensus despite the packet errors that may induce information delays due to retransmissions and packet drops. Moreover, by utilizing the acknowledgement feedback of the proposed algorithm, nodes are able to determine their out-degree, and know whether and when their transmitted packets arrived at their destination successfully. The convergence analysis of our proposed algorithm shows that nodes converge to the average consensus value with probability $1$ despite the packet errors. Finally, in the numerical evaluation we emphasize the convergence speed-up that can be achieved under different settings of the proposed \hharqrc{} algorithm, and discuss the trade-off between convergence speed and communication cost, under different channel conditions, while also comparing against the current state-of-the-art algorithm.

\subsection{Future Directions}
As it is evident from the results of this work, there exists a trade-off between transmitting new information with a lower success probability and retransmitting previously failed information with a higher success probability. This is a well-known inherent tradeoff between freshness and reliability in networked systems \cite{ceran2019average,soleymani2024networked}. Thus, a promising extension of this work to achieve faster convergence to the average consensus value, as well as to operate under specific communication and computation resource limitations, is to derive a packet transmission policy to dynamically select the retransmission limit of the \hharqrc{} algorithm. Such a policy could be derived based on the quality of the channel that is implicitly inferred from the acknowledgement feedback.

\begin{appendices}
\section{Auxiliary Lemmas}\label{appendix:lemmas}
Lemma~\ref{lem:hadjicostis2015robust1} implies that the event $\tilde{y}_j[k] \geq c^{\rho}$, where $c$ and $\rho$ were defined in \S\ref{sec:aug_digraph}, occurs infinitely often and hence each node computes the ratio consensus value $z_j[k]=\tilde{x}_j[k]/\tilde{y}_j[k]$ 
at infinitely many time steps with probability $1$. Lemma~\ref{lem:hadjicostis2015robust2} implies that as $k$ goes to infinity, the sequence of ratio computations $z_j[k]$ converges to the network-wide average of the initial values $z^{*}$ in \eqref{eq:ratio_consensus_value} almost surely (with probability $1$).

\begin{lem}{Hadjicostis \etal \cite{hadjicostis2015robust}}\label{lem:hadjicostis2015robust1}.
Let $\mathcal{K}_{j}=\{\kappa_{j}^1, \kappa_{j}^2, \ldots\}$ denote the sequence of time steps when $\tilde{y}_{j}[k] \geq c^{\rho}$ and thus node $v_j \in \mathcal{V}$ updates its ratio consensus estimate using $z_j[k]=\tilde{x}_j[k]/\tilde{y}_j[k]$, where $\kappa_{j}^t<\kappa_{j}^{t+1}$, $t \geq 1$. The sequence $\mathcal{K}_{j}$ contains infinitely many elements with probability $1$.
\end{lem}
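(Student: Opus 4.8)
The plan is to exhibit a bounded-length ``good'' pattern of channel outcomes that occurs with probability bounded away from zero irrespective of the past, and which forces $\tilde y_j[k]\ge c^{\lambda}$ at every actual node $v_j$; a conditional Borel--Cantelli argument then upgrades this to ``infinitely often, almost surely.'' The two facts I would lean on are the mass-conservation property of the column-stochastic iteration~\eqref{eq:consensus_matrix_form} and the uniform lower bound $c$ on the nonzero entries of every matrix in $\set{X}$, both established in Section~\ref{sec:aug_digraph}.

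First I would record the conservation law: since $\tilde y[k+1]=\Xi[k]\tilde y[k]$ with every $\Xi[k]\in\set{X}$ nonnegative and column-stochastic, and $\tilde y[0]$ carries a $1$ in each of the $n$ coordinates of the actual nodes and $0$ elsewhere, induction gives $\tilde y[k]\ge 0$ entrywise and $\mathbf{1}^{\top}\tilde y[k]=\mathbf{1}^{\top}\tilde y[0]=n$ for all $k\ge 0$.

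Next I would split the time axis into consecutive blocks $W_0,W_1,W_2,\dots$, each of length $\lambda$ (taken large enough, e.g.\ $\lambda=\bar\tau+n$), and let $G_t$ be the event that throughout $W_t$ every pending or fresh transmission is received error-free and no packet is dropped. Denoting by $M_t=\Xi[(t+1)\lambda-1]\cdots\Xi[t\lambda]$ the product of the matrices realized over $W_t$ (so that $\tilde y$ at the end of $W_t$ equals $M_t$ times $\tilde y$ at its start), the key structural claim is that on $G_t$ one has $M_t(j,i)\ge c^{\lambda}$ for every actual node $v_j$ and every coordinate $i\in\{1,\dots,\tilde n\}$: under a delay-free, drop-free run the feed-forward virtual subsystem empties all of its mass into the destination actual nodes within $\bar\tau$ steps, and the $\le n-1$ subsequent steps of ordinary column-stochastic mixing over the strongly connected digraph of actual nodes make the corresponding $n\times\tilde n$ block entrywise positive; since each nonzero entry of each $\Xi[k]$ is at least $c$, a positive entry produced along a path of length $\lambda$ is at least $c^{\lambda}$. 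Because packet errors are independent over time and links with probability at most $\bar q:=\max_{j,i}q_{ji}<1$, we get $\mathbb{P}(G_t\mid\mathcal{F}_{t\lambda})\ge p_{\min}:=(1-\bar q)^{m\lambda}>0$ for every $t$, with $(\mathcal{F}_k)$ the natural filtration; hence $\sum_t\mathbb{P}(G_t\mid\mathcal{F}_{t\lambda})=\infty$ almost surely, and the conditional (L\'evy) second Borel--Cantelli lemma gives $\mathbb{P}(G_t\text{ for infinitely many }t)=1$.

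Finally I would conclude: on $G_t$, for any actual node $v_j$,
\[
\tilde y_j\big[(t+1)\lambda\big]=\sum_{i=1}^{\tilde n}M_t(j,i)\,\tilde y_i[t\lambda]\ \ge\ c^{\lambda}\sum_{i=1}^{\tilde n}\tilde y_i[t\lambda]=c^{\lambda}n\ \ge\ c^{\lambda},
\]
so $(t+1)\lambda\in\mathcal{K}_j$ whenever $G_t$ occurs, and since $G_t$ occurs for infinitely many $t$ almost surely, $\mathcal{K}_j$ is almost surely infinite; $v_j$ being arbitrary, the claim follows for every $v_j\in\set{V}$. I expect the only genuinely delicate step to be the structural claim behind $G_t$ --- verifying that a positive-probability, bounded-length sequence of channel outcomes, \emph{uniformly over whatever retransmissions and buffered masses have accumulated}, makes the block of top $n$ rows of $M_t$ entrywise positive; everything else is the conservation identity and Borel--Cantelli bookkeeping.
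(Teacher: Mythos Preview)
The paper does not prove this lemma; it is imported verbatim from \cite{hadjicostis2015robust} and only the two ingredients you use---the existence of a finite $\lambda$ and a sequence $\Xi_{\lambda},\ldots,\Xi_1\in\set{X}$ whose product has strictly positive top-$n$ rows, and the uniform lower bound $p_{\min}>0$ on the probability that this sequence materializes---are restated in Section~\ref{sec:aug_digraph}. Your reconstruction is correct and is exactly the argument of the cited reference: mass conservation plus a conditional Borel--Cantelli application to the block events $G_t$. The one step you flag as delicate (that the ``all error-free'' pattern drains \emph{every} virtual coordinate, including the buffer nodes $v_{ji}^{(f)}$, into the actual nodes within $\lambda$ steps regardless of the accumulated state) is indeed where the model-specific bookkeeping lives; the paper asserts this structural fact rather than verifying it, so your cautious formulation is appropriate, and you may need $\lambda$ slightly larger than $\bar\tau+n$ to guarantee the buffer column is also covered.
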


\begin{lem}{Coefficient of Ergodicity, Hadjicostis \etal \cite{hadjicostis2015robust}}\label{lem:hadjicostis2015robust2}.
Let $L_k = \prod_{\ell=0}^{k-1} \Xi[{\ell}]$ be the resulting column stochastic matrix from the product of column stochastic matrices $\Xi[{\ell}]$. Then the coefficient of ergodicity $\delta(L_k)\triangleq \max_j \max_{i_1,i_2}|L_k(j,i_1) - L_k(j,i_2)|$ converges almost surely to zero.
\end{lem}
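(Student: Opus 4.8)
The plan is to treat the final statement as a weak-ergodicity result for products of random column-stochastic matrices, exploiting exactly the three facts already established for the augmented model in \S\ref{sec:aug_digraph}: the matrices $\Xi[k]$ are drawn from the finite set $\set{X}$, every positive entry is lower bounded by $c>0$, and a length-$\lambda$ block whose product has a strictly positive top $n\times n$ part materializes with probability at least $p_{\min}>0$. First I would replace $\delta$ by the equivalent column-contraction coefficient $\tau(A)=\max_{i_1,i_2}\tfrac12\sum_j|A(j,i_1)-A(j,i_2)|$, which is just the Dobrushin coefficient of $A^{\top}$. Since $\delta(L_k)\le 2\tau(L_k)$ and $\tau(L_k)\le\tilde n\,\delta(L_k)$, it suffices to prove $\tau(L_k)\to 0$ almost surely. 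The advantage of $\tau$ is that it is submultiplicative, $\tau(AB)\le\tau(A)\tau(B)$, and obeys $0\le\tau(\cdot)\le 1$ for column-stochastic matrices, with $\tau(A)<1$ precisely when every pair of columns of $A$ shares a common row in which both entries are positive (column scrambling).

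Next I would cut the product $L_{N\lambda}$ into $N$ consecutive blocks $B_0,\dots,B_{N-1}$ of length $\lambda$ and call a block \emph{good} when its product is column scrambling. By submultiplicativity, $\tau(L_{N\lambda})\le\prod_{m=0}^{N-1}\tau(B_m)$, while the monotonicity $\tau(AB)\le\tau(B)$ lets me transfer any bound from multiples of $\lambda$ to an arbitrary $k$. The crucial quantitative step is to show that a good block forces not merely the actual-node $n\times n$ part to be positive but the \emph{entire} top $n$ rows, i.e.\ $B_m(j,i)\ge c^{\lambda}$ for every $j\in\set{V}$ and every column $i\in\{1,\dots,\tilde n\}$, including columns indexed by delay and buffer virtual nodes. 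This would give $\tau(B_m)\le 1-n\,c^{\lambda}=:\rho<1$ on every good block, and $\tau(B_m)\le 1$ otherwise.

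The main obstacle is precisely this positivity-of-all-columns claim, because the virtual delay and buffer nodes occupy rows whose own diagonal entries are never made strictly positive by a single $\Xi[k]$; I must argue that within a block of bounded length the mass initially held at any virtual node is released into an actual node — a delay node $v_{ji}^{(r)}$ forwards its content one hop per step toward $v_j$ via the superdiagonal identity blocks, and a buffer node empties at the next successful (possibly delayed) transmission — after which the strong connectivity of the original digraph $\set{G}$ spreads it to every actual node through $P^{(0)}$. Choosing $\lambda$ larger than the maximal delay depth $\bar\tau$ plus the primitivity exponent of the original topology, and tracking the $c$-lower bound along this fixed-length routing path, yields the uniform bound $c^{\lambda}$ and hence $\rho<1$; this is the step where the quantities $\lambda$ and $c$ introduced in \S\ref{sec:aug_digraph} must be pinned down carefully against the block structure of $\Xi[k]$ in \eqref{eq:Xi}.

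Finally I would close the argument probabilistically. Since a good block occurs with conditional probability at least $p_{\min}>0$ given the entire past, the conditional second Borel--Cantelli lemma (L\'evy's extension) guarantees that infinitely many blocks are good almost surely, so the number $G_N$ of good blocks among the first $N$ satisfies $G_N\to\infty$ a.s. Combining the block bounds gives $\tau(L_{N\lambda})\le\rho^{\,G_N}\to 0$ almost surely, and the monotonicity extension upgrades this to $\tau(L_k)\to 0$ for all $k$. Hence $\delta(L_k)=\max_j\max_{i_1,i_2}|L_k(j,i_1)-L_k(j,i_2)|\to 0$ almost surely, which is the assertion of the lemma.
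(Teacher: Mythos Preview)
The paper does not supply its own proof of this lemma: it is listed in Appendix~\ref{appendix:lemmas} as an auxiliary result taken verbatim from \cite{hadjicostis2015robust} and is then invoked, without argument, inside the proof of Theorem~\ref{theorem:arq_consensus}. So there is nothing in the paper to compare your proposal against; you are supplying a proof the authors chose to outsource.

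That said, your route is the standard weak-ergodicity argument for random products of column-stochastic matrices and is essentially sound: pass to the submultiplicative Dobrushin coefficient $\tau$, chop into length-$\lambda$ blocks, bound $\tau$ by $\rho<1$ on ``good'' blocks, and use the conditional Borel--Cantelli lemma to force infinitely many good blocks almost surely. The one genuinely delicate step is the one you already flag as the main obstacle. The paper's \S\ref{sec:aug_digraph} only asserts that the distinguished length-$\lambda$ product $L$ satisfies $L(j,i)>0$ for $j,i\in\set{V}$, i.e.\ positivity of the top-left $n\times n$ block; for column scrambling of the full $\tilde n\times\tilde n$ product you need every column---including those indexed by delay nodes $v_{ji}^{(r)}$ and buffer nodes $v_{ji}^{(f)}$---to be positive in some common row. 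Your fix (enlarge $\lambda$ so that a good block first flushes every virtual node into an actual node and then mixes via strong connectivity of $\set{G}$) is the right idea and is consistent with the block structure of $\Xi[k]$ in \eqref{eq:Xi}: under the ``all delays zero, all transmissions successful'' realization, the superdiagonal identity blocks and $D^{\text{succ}}$ drain the delay chain in at most $\bar\tau$ steps, while $D^{(0)}[k{+}1]$ empties the buffers in one step, after which $P^{(0)}=P$ spreads mass over $\set{V}$. You should state explicitly that you are taking $\lambda$ large enough for both phases (release plus primitivity exponent of $P$), since the paper's own $\lambda$ is only pinned down by the weaker $n\times n$ positivity requirement; the lemma's statement does not fix $\lambda$, so you are free to do this.
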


\section{Proof of Theorem 1}\label{appendix:proofs}
\begin{proof}
Theorem~\ref{theorem:arq_consensus} is established following similar analysis as in \cite{hadjicostis2015robust} \emph{mutatis mutandis}, by replacing the matrix that corresponds to the network topology of the non-delayed case with the augmented matrix that incorporates the \arq-based communication protocol defined in \eqref{eq:Xi}. First, it is clear to see that the evolution of the augmented variables in \eqref{eq:consensus_matrix_form} can be written in the following form:
\begin{align}
\tilde{x}[k]&=\Xi[k-1] \Xi[k-2] \cdots \Xi[1] \Xi[0] \tilde{x}[0] \nonumber\\
&= \left(\Pi_{\ell=0}^{k-1} \Xi[{\ell}]\right) \tilde{x}[0] \equiv L_k \tilde{x}[0],\nonumber\\
\tilde{y}[k]&=\Xi[k-1] \Xi[k-2] \cdots \Xi[1] \Xi[0] \tilde{y}[0] \nonumber\\
&= \left(\Pi_{\ell=0}^{k-1} \Xi[{\ell}]\right) \tilde{y}[0] \equiv L_k \tilde{y}[0],\nonumber
\end{align}
where $L_k=\Pi_{\ell=0}^{k-1} \Xi[{\ell}]$ is the forward product of column stochastic matrices $\Xi[{\ell}] \in \set{X}$, $\forall \ell=0,1,\ldots,k-1$. From the above notation, it is clear that for each node $v_j$ $\in \mathcal{V}$, we have $\tilde{x}_j[k]=L_k(j,:) \tilde{x}[0]$ and $\tilde{y}_j[k]=L_k(j,:) \tilde{y}[0]$, where $L_k(j,:)$ denotes the $j$-th row of the forward product matrix $L_k$. From Lemma~\ref{lem:hadjicostis2015robust2} we can infer that $\delta(L_k) < \psi$ with probability $1-\epsilon$ among chosen realizations for all $k\geq k_{\psi}$, where $0<\psi\leq1$ and $\epsilon>0$. Considering any $k \geq k_\psi$ such that $y_j[k] \geq c^\rho$, we can further infer that the maximum entry of $L_k(j,:)$ is at least $c^{\rho} / n$ since $y_j[k]=L_k(j,:) \tilde{y}[0] \geq c^{\rho}$ and $\sum_l \tilde{y}_l[0]=n$. Moreover, since $\delta\left(L_k\right)<\psi$, the columns of matrix $L_k$ are "within $\psi$" of each other. Hence, the $j$-th row of $L_k$ can be written as $L_k(j,:)=c_j[k]\left(\mathbf{1}^{\top}+\mathbf{e}^{\top}[k]\right)$, where $\mathbf{e}^{\top}[k]=$ $\left[e_1[k], e_2[k], \ldots, e_m[k]\right]$ is an $m$-dimensional row vector that satisfies $e_{\max }[k] \equiv \max _l\left|e_l[k]\right|<\frac{\psi}{2 c_j[k]}$, and $c_j[k] \geq \frac{c^{\rho}}{n}-\frac{\psi}{2}$ (assume without loss of generality that $\psi<\frac{c^{\rho}}{n}$), and $\mathbf{1}^{\top}$ is an all-ones row vector of appropriate dimension. Hence, at each node $v_j$, the ratio is obtained by:
\begin{align}\label{eq:proof_ratio}
z_j[k] = \frac{\tilde{x}_j[k]}{\tilde{y}_j[k]}=\frac{L_k(j,:) \tilde{x}[0]}{L_k(j,:) \tilde{y}[0]}&=\frac{c_j[k]\left(\mathbf{1}^{\top}+\mathbf{e}^{\top}[k]\right) \tilde{x}[0]}{c_j[k]\left(\mathbf{1}^{\top}+\mathbf{e}^{\top}[k]\right) \tilde{y}[0]}\nonumber\\&=\frac{\left(\mathbf{1}^{\top}+\mathbf{e}^{\top}[k]\right) \tilde{x}[0]}{\left(\mathbf{1}^{\top}+\mathbf{e}^{\top}[k]\right) \tilde{y}[0]}.
\end{align}
First, we establish the bounds on the numerator of the above expression as
\begin{align}
\Sigma_x - e_{\max}[k] \Sigma_{|x|} \leq \left(\mathbf{1}^{\top}+\mathbf{e}^{\top}[k]\right) \tilde{x}[0] \leq \Sigma_x + e_{\max}[k] \Sigma_{|x|}\nonumber,
\end{align}
where $\Sigma_x=\sum_l x_l[0]$, and $\Sigma_{|x|}=\sum_l |x_l[0]|$, for $l=1,\ldots,n$. Now, for the bounds on the denominator, we know that $\sum_l y_l[0]=n$, since $y_l[0]=1$ for $l=1,\ldots,n$. Hence, we can bound the denominator as
\begin{align}
n\left(1-e_{\max}[k]\right) \leq\left(\mathbf{1}^{\top}+\mathbf{e}^{\top}[k]\right) \tilde{y}[0] \leq n\left(1+e_{\max}[k]\right).\nonumber
\end{align}
Then, the ratio of the bounded numerator and denominator gives: 
\begin{align}
\frac{\Sigma_x - e_{\max}[k] \Sigma_{|x|}} {n\left(1+e_{\max}[k]\right)} \leq \frac{\left(\mathbf{1}^{\top}+\mathbf{e}^{\top}[k]\right) \tilde{x}[0]}{\left(\mathbf{1}^{\top}+\mathbf{e}^{\top}[k]\right) \tilde{y}[0]} \leq \frac{\Sigma_x + e_{\max}[k] \Sigma_{|x|}}{n\left(1-e_{\max}[k]\right)}\nonumber.
\end{align}
From \eqref{eq:ratio_consensus_value} we have $z^{*}=\frac{1}{n}\sum_{l}x_l[0]=\frac{\Sigma_x}{n}$, and thus:
\begin{align}
z^* - E[k] \leq z_j[k] \leq z^* + E[k],
\end{align}
where $E[k]=z^* \left(\Sigma_x + \Sigma_{|x|}\right) e^{\max }[k]/\Sigma_x\left(1-e_{\max}[k]\right)$.
To ensure that $z^* - \epsilon \leq z_j[k] \leq z^* + \epsilon$ holds whenever $k\geq k_{\psi}$ and $k \in \set{K}_j$, we can choose $\psi < (2 c^{\rho}\epsilon / (\Sigma_x + \Sigma_{|x|} + 2n\epsilon))$, such that for any desirable $\epsilon>0$, $e^{\text{max}}[k] < (n\epsilon / (\Sigma_x + \Sigma_{|x|} + n\epsilon)$, holds. Hence, the ratio at each node in \eqref{eq:proof_ratio} converges with probability 1 to $z^{*}$ as $k$ goes to infinity.
\end{proof} 

\end{appendices}

\bibliographystyle{IEEEtran} 
\bibliography{references}

\begin{IEEEbiography}[{\includegraphics[width=1in,height=1.25in,clip,keepaspectratio]{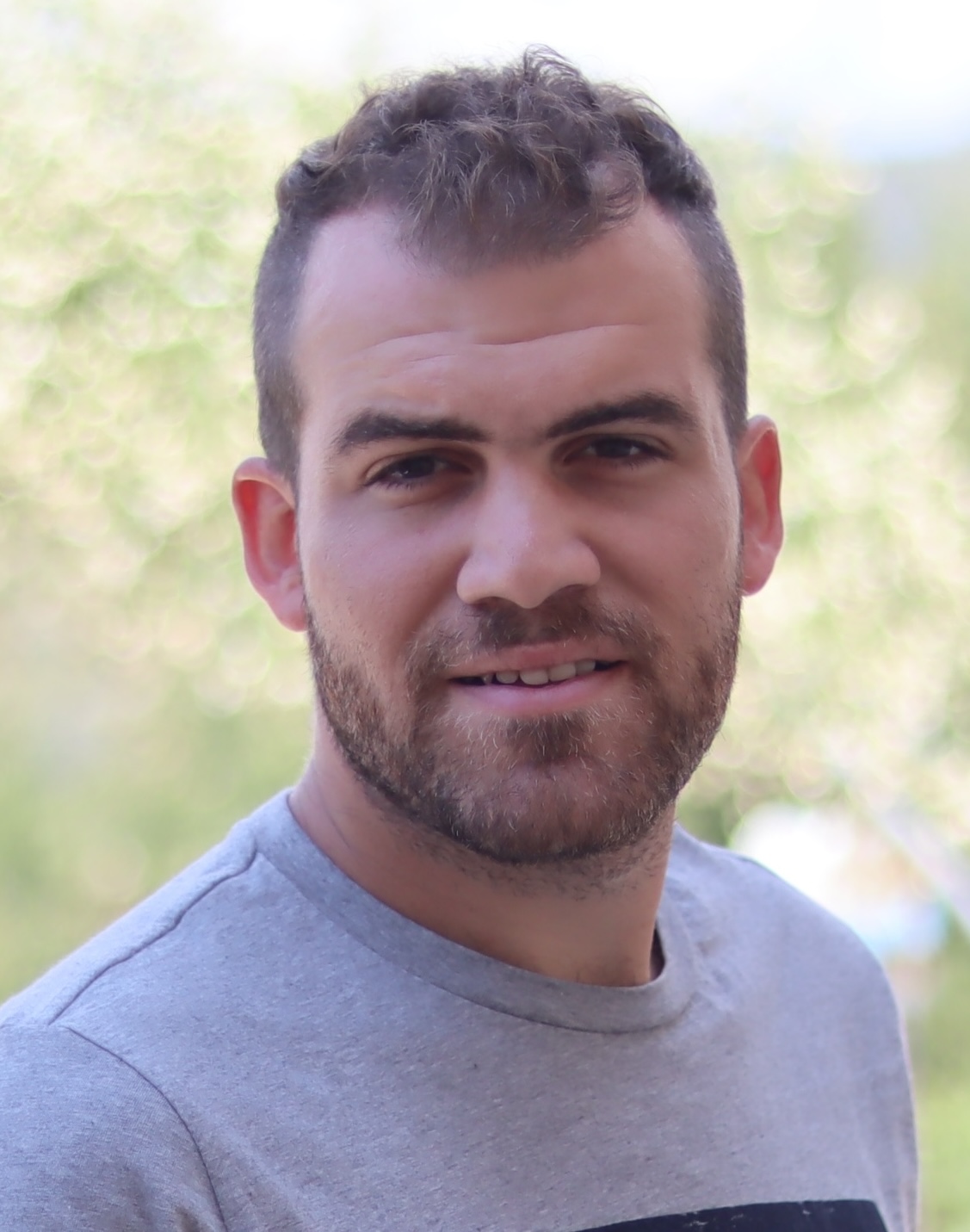}}]{Evagoras Makridis} received his B.Sc. degree in Electrical Engineering from Cyprus University of Technology in June 2017, and his M.Sc. double degree in Autonomous Systems from KTH Royal Institute of Technology and Aalto University through the EIT Digital Master School. During his postgraduate (M.Sc.) studies, he worked as a Research Assistant at Aalto University and KTH Royal Institute of Technology, and subsequently he joined Ericsson AB to pursue his M.Sc. thesis. In 2021, he started his doctoral (Ph.D) studies, focusing on distributed optimization for decision and control of networked systems, at the Department of Electrical and Computer Engineering at the University of Cyprus, under the supervision of Prof. Themistoklis Charalambous.
\end{IEEEbiography}
\vskip -30pt plus -1fil
\begin{IEEEbiography}[{\includegraphics[width=1in,height=1.25in,clip,keepaspectratio]{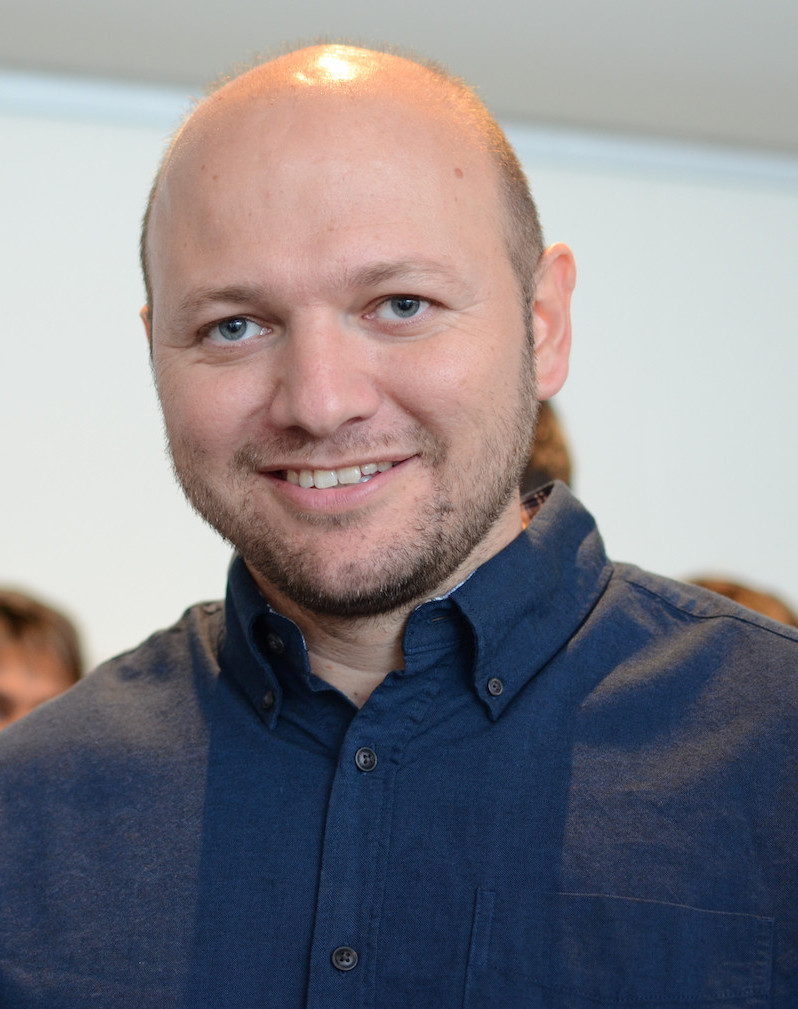}}]{Themistoklis Charalambous} completed his PhD studies in the Control Laboratory of the Engineering Department, Cambridge University in 2009. Following his PhD, he joined the Human Robotics Group as a Research Associate at Imperial College London (2009-2010). He also worked as a Visiting Lecturer at the Department of Electrical and Computer Engineering, University of Cyprus (2010-2011), as a Postdoctoral Researcher at the Department of Automatic Control of the School of Electrical Engineering at the Royal Institute of
Technology (KTH) (2012-2015), and as a Postdoctoral Researcher at the Department of Electrical Engineering at Chalmers University of Technology (2015-2016). In 2017, he joined the Department of Electrical Engineering and Automation, Aalto University as a tenure-track Assistant Professor becoming a tenured Associate Professor in July 2020. In September 2021, he joined the University of Cyprus as a tenure-track Assistant Professor and remains associated with Aalto University as a Visiting Professor. Since April 2023, he has been a visiting professor at the FinEst Centre for Smart Cities.
\end{IEEEbiography}
\vskip -30pt plus -1fil
\begin{IEEEbiography}[{\includegraphics[width=1in,height=1.25in,clip,keepaspectratio]{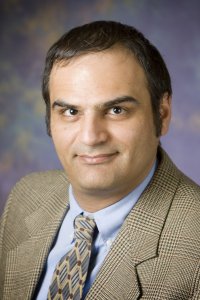}}]{Christoforos N. Hadjicostis (M’99, SM’05, F’20)} received the S.B. degrees in electrical engineering, in computer science and engineering, and in mathematics, the M.Eng. degree in electrical engineering and computer science in 1995, and the Ph.D. degree in electrical engineering and computer science in 1999, all from Massachusetts Institute of Technology, Cambridge. In 1999, he joined the Faculty at the University of Illinois at Urbana–Champaign, where he served as Assistant and then Associate Professor with the Department of Electrical and Computer
Engineering, the Coordinated Science Laboratory, and the Information Trust Institute. Since 2007, he has been with the Department of Electrical and Computer Engineering, University of Cyprus, where he is currently Professor.
\end{IEEEbiography}

\end{document}